\crefname{hypothesis}{Hypothesis}{Hypotheses}
\begin{document}
\selectlanguage{english}

\global\long\def\relatedversion{\thanks{The full version of the paper can be accessed at \url{https://arxiv.org/abs/0000.00000}}}%
\title{Can quantum dynamics emerge from classical chaos?}
\author{Frédéric Faure\thanks{{\small Institut Fourier, UMR 5582, Laboratoire de Mathématiques, Université
Grenoble Alpes, CS 40700, 38058 Grenoble cedex 9, France , \protect\href{mailto:frederic.faure@univ-grenoble-alpes.fr}{frederic.faure@univ-grenoble-alpes.fr},}
\protect\url{https://www-fourier.ujf-grenoble.fr/~faure/}} }

\maketitle

\fancyfoot[R]{\scriptsize{Copyright ©\ 20XX by SIAM}\\
{\scriptsize{} Unauthorized reproduction of this article is prohibited}}{\scriptsize\par}




\begin{abstract}
Anosov geodesic flows are among the simplest mathematical models of
deterministic chaos. In this survey we explain how, quite unexpectedly,
quantum dynamics emerges from purely classical correlation functions.
The underlying mechanism is the discrete Pollicott--Ruelle spectrum
of the geodesic flow, revealed through microlocal analysis. This spectrum
naturally arranges into vertical bands; when the rightmost band is
separated from the rest by a gap, it governs an effective dynamics
that mirrors quantum evolution.
\end{abstract}
\global\long\def\eq#1{\underset{(#1)}{=}}%

\section{Introduction.}

\subsection{Problem of prediction.}

A fundamental question in science, central to dynamical systems theory,
is whether one can predict the long-term behavior of a system from
its short-term evolution law. Such questions arise not only in physics
but also in pure mathematics, including areas such as arithmetic and
geometry.

\subsection{Example in arithmetic.}

If the evolution law is simple to write down, one might expect the
long--time behavior to be simple to predict. In practice this is
rarely the case, and this is the main difficulty encountered in the
``theory of deterministic chaos.'' To illustrate this, let us present
a very simple example of two closely related deterministic laws, each
defining a sequence $t\in\mathbb{N}\;\longmapsto\;x(t)\in\mathbb{N}$
from an initial value $x(0)\in\mathbb{N}$.

\emph{First law.} 
\begin{equation}
x\left(t+1\right)=\begin{cases}
\frac{1}{2}x\left(t\right) & \mbox{ if }x\left(t\right)\mbox{ is even}\\
\frac{1}{2}\left(x\left(t\right)+1\right) & \mbox{ if }x\left(t\right)\mbox{ is odd}
\end{cases}\label{eq:suite_geom}
\end{equation}
This simple rule is called the \emph{geometric series}. One can predict
explicitly that for any $t\in\mathbb{N}$, $x(t)=\Bigl\lceil\tfrac{1}{2^{t}}\,x(0)\Bigr\rceil,$where
$\lceil\cdot\rceil$ denotes the ceiling function. Thus the sequence
decays monotonically, and for any $t\geq\tfrac{\ln x(0)}{\ln2}$ one
has $x(t)=1$.

\emph{Second law.} 
\begin{equation}
x\left(t+1\right)=\begin{cases}
\frac{1}{2}x\left(t\right) & \mbox{ if }x\left(t\right)\mbox{ is even}\\
\frac{1}{2}\left(3x\left(t\right)+1\right) & \mbox{ if }x\left(t\right)\mbox{ is odd}
\end{cases}\label{eq:suite_syracuse}
\end{equation}
Despite its simple appearance, it remains the famous \href{https://en.wikipedia.org/wiki/Collatz_conjecture}{Collatz conjecture}
whether, for every initial value $x(0)\in\mathbb{N}$, the sequence
eventually falls into the cycle $1,2,1,2,\ldots$. Numerically one
observes a complicated (chaotic) transient behavior. For instance,
starting from $x(0)=7$, one obtains 
\[
7,\,11,\,17,\,26,\,13,\,20,\,10,\,5,\,16,\,8,\,4,\,2,\,1,\,2,\,1,\,2,\,1,\ldots
\]
(see Figure~\ref{fig:Comportement-de-la}). 

\begin{center}
\begin{figure}
\begin{centering}
\scalebox{0.7}[0.7]{\input{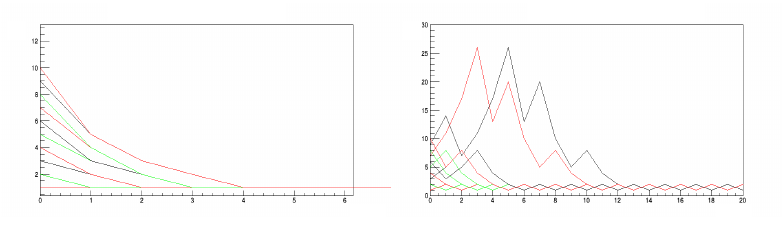tex_t}}
\par\end{centering}
\caption{Illustration of the geometric sequence (\ref{eq:suite_geom}) and
the Collatz sequence (\ref{eq:suite_syracuse}), starting from different
initial values at $t=0$.}\label{fig:Comportement-de-la}
\end{figure}
\par\end{center}

\subsection{Example in geometry.}

For a smooth Riemannian manifold $(\mathcal{N},g)$, the geodesic
flow (defined precisely in Section~\ref{def:The-geodesic-vector})
takes place on the unit cotangent bundle $(T^{*}\mathcal{N})_{1}$.
In physics, the geodesic flow corresponds to the motion of a free
particle in the curved space $(\mathcal{N},g)$, i.e.\ motion without
external forces, following the ``straightest possible'' path. Intuitively,
for a smooth surface $\mathcal{N}$ embedded in $\mathbb{R}^{3}$,
starting from any point $x\in\mathcal{N}$ and any direction $V\in T_{x}\mathcal{N}$,
the geodesic flow can be visualized as a narrow adhesive tape on $\mathcal{N}$,
lying flat without folds and passing through $(x,V)$ (see Figure~\ref{fig:Geodesic-flow.-Adhesive}). 

\begin{figure}
\begin{centering}
\includegraphics[width=0.15\columnwidth]{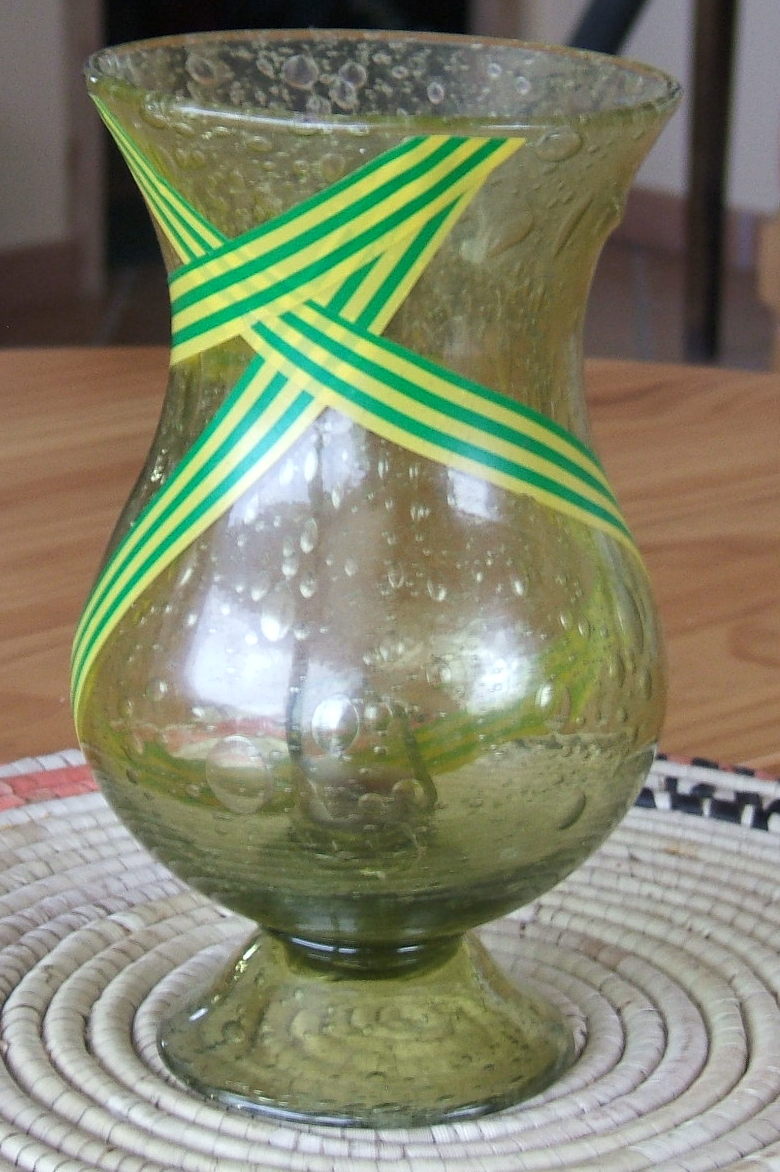}$\quad$\scalebox{0.7}[0.7]{\input{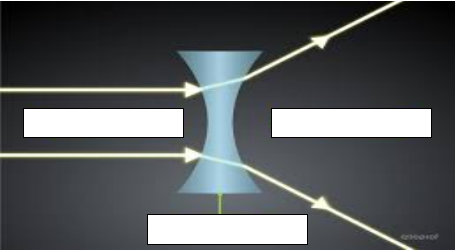tex_t}}
\par\end{centering}
\caption{{\small\textbf{}}An adhesive tape laid on a vase $\mathcal{N}$ follows
the geodesics of the surface. Similarly, a light beam follows geodesics
in a medium with spatially varying velocity $c(q)>0$. In this case,
the metric is conformal to the Euclidean metric: $g_{q}^{*}\;=\;c(q)\,g_{\mathrm{Eucl}}^{*}.$}\label{fig:Geodesic-flow.-Adhesive}
\end{figure}

Despite the simplicity of this deterministic evolution law, the long--time
behavior of the geodesic flow may be difficult to predict. Consider,
for example, a closed hyperbolic surface $(\mathcal{N},g)$, that
is, one whose Gauss curvature is $\kappa=-1$ at every point. Figure~\ref{fig:Geodesic-trajectories-on}(a)
shows a single geodesic which appears highly chaotic, despite its
unique and deterministic nature. This illustrates the so--called
paradox of ``deterministic chaos'' (see also the videos~\cite{YouTubePlaylist-FredLeProf}).

Instead of considering the evolution of a single particle, one may
simultaneously follow a smooth cloud of independent particles starting
from very close initial conditions. This situation will be described
in precise mathematical terms in Section~\ref{sec:Correlation-functions}
below. As illustrated in Figure~\ref{fig:Geodesic-trajectories-on}(b),
the cloud rapidly spreads over the entire phase space $(T^{*}\mathcal{N})_{1}$,
which explains the previous impression of unpredictability for a single
trajectory. Theorem~\ref{thm:If--has} below shows that the cloud
spreads and converges exponentially towards the uniform measure on
$(T^{*}\mathcal{N})_{1}$. This phenomenon is called \emph{exponential
mixing}, and it expresses the chaotic character of the geodesic flow.

If one looks more closely, one observes fluctuations of the density
of this cloud around the equilibrium state. These fluctuations decay
exponentially like $e^{-t/2}$, where $t$ denotes the evolution time.
If one amplifies these fluctuations by the factor $e^{t/2}$, then,
as shown in Figure~\ref{fig:Geodesic-trajectories-on}(c), they behave
like waves on $\mathcal{N}$. This observation is formalized in Theorem~\ref{thm:.-Suppose-}
below (see also Section~\ref{sec:Pollicott-Ruelle-spectrum} for
the special case of hyperbolic surfaces), and it constitutes the main
theme of this paper, raised by the question in the title: \emph{``Can
quantum dynamics emerge from classical deterministic chaos?''} Indeed,
the wave equation is usually regarded as the quantization of the geodesic
flow, since $\mathrm{Op}\!\left(\|\xi\|\right)\;\approx\;\sqrt{\Delta}.$

\begin{figure}
\centering{}\input{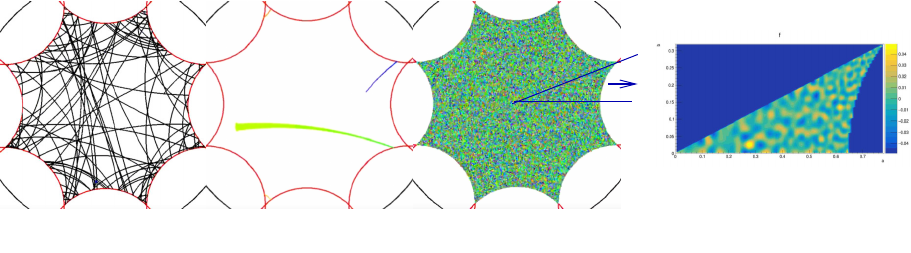tex_t}\caption{Geodesic trajectories on the \protect\href{https://en.wikipedia.org/wiki/Bolza_surface}{Bolza surface},
represented as a fundamental domain in the Poincaré disk $\mathbb{D}^{2}$:
$\mathcal{N}=\Gamma\backslash\mathrm{SL}_{2}\mathbb{R}/\mathrm{SO}_{2}$,
with $\Gamma\subset\mathrm{SL}_{2}\mathbb{R}$ cocompact. Color encodes
directions. (a) A single trajectory. (b) An ensemble of nearby trajectories
spreads exponentially. (c) After some time, a small cloud of points
spreads over the whole phase space $(T^{*}\mathcal{N})_{1}$, illustrating
exponential mixing towards equilibrium. (d) Exponentially small fluctuations
around equilibrium behave like quantum waves.}\label{fig:Geodesic-trajectories-on}
\end{figure}

\subsection{Structure of the paper.}

In Section~\ref{sec:Correlation-functions}, we will introduce the
mathematical model and state several theorems that express this type
of result for specific cases such as Anosov geodesic vector fields
$X$.

The techniques involved are rich and fundamental in addressing these
questions. First, Riemannian geometry and contact geometry provide
the framework to define the geodesic flow. Next, functional analysis,
the spectral theory of non self-adjoint operators, and semigroup theory
are used to describe the evolution of smooth functions on the phase
space $(T^{*}\mathcal{N})_{1}$ over long times. But most importantly,
microlocal analysis and symplectic geometry are more than just technical
tools: they reveal in a transparent way the hidden mechanisms underlying
mixing and the fluctuations of correlations for the Anosov geodesic
flow.

A central outcome of microlocal analysis is the existence of a \emph{discrete
Pollicott--Ruelle spectrum} for the Anosov vector field $X$, which
manifests itself as resonances in the behavior of correlation functions
(see Figure~\ref{fig:Band-structure}). The exponential mixing property
is explained by the presence of a dominant isolated eigenvalue at
$z=0$ in this spectrum. The emergence of quantum dynamics, on the
other hand, is related to the existence of an internal spectral band
separated from the rest of the spectrum by a spectral gap. 

In Section~\ref{sec:Relation-with-periodic} we discuss an important
aspect of the evolution operator that is specific to deterministic
dynamical systems and expressed through the \emph{Atiyah--Bott--Guillemin
trace formula}: the formal trace of $e^{tX}$ reveals the periodic
orbits. Since the trace is independent of the choice of basis, it
can also be expressed in terms of the discrete Ruelle eigenvalues
of the vector field. From this, one can deduce further relations between
periodic orbits and the Pollicott--Ruelle spectrum, encoded in dynamical
zeta functions, which are useful for various counting problems \cite{liverani_giulietti_2012},\cite{chaubet2024closed}.

In Section~\ref{sec:Anosov-geodesic-flow} we explain the main steps
leading to the central theorems presented in this paper: the discreteness
of the Pollicott--Ruelle spectrum and its band structure. We show
how microlocal analysis is used to define anisotropic Sobolev spaces,
and how symplectic geometry enters, in particular through the linear
symplectic bundle $TT^{*}\!\left((T^{*}\mathcal{N})_{1}\right)$ with
the linearized approximation of the Anosov geodesic flow. We present
the fundamental mechanisms that appear in the linearized model, and
which can be summarized by the heuristic slogan: \emph{``quantum
mechanics is the square root of classical dynamics.''}

More technically, for a symplectic linear space $F$ (here a fiber
of $TT^{*}\!\left((T^{*}\mathcal{N})_{1}\right)$), there is a factorization
formula $L^{2}(F)\;=\;\mathrm{Spin}_{+}(F)\otimes\mathrm{Spin}_{-}(F),$
in terms of symplectic spinor spaces $\mathrm{Spin}_{\pm}(F)$ (thought
of as ``quantum spaces''), and the pushforward operator $(\Phi^{t})^{-\circ}u:=u\circ\Phi^{-t}$
of a linear symplectic map $\Phi$ on $F$ (here the linearized dynamics
of the geodesic flow in the fibers) factorizes as $(\Phi^{t})^{-\circ}\;=\;\mathrm{Op}(\Phi^{t})\otimes\bigl(\mathcal{C}\,\mathrm{Op}(\Phi^{t})\,\mathcal{C}\bigr),$
where $\mathrm{Op}(\Phi^{t})$ denotes Weyl quantization (a metaplectic
operator) and $\mathcal{C}$ is complex conjugation.

In addition, since the dynamics is hyperbolic (as is the case for
any Anosov geodesic flow), the second factor converges, for large
time, towards a rank-one projector (the ``first band''), which plays
a role analogous to the ad-hoc Szegő, Bergman, or Toeplitz projectors
in standard geometric quantization. As a consequence, the classical
dynamics $(\Phi^{t})^{-\circ}$ is effectively described by the first
factor $\mathrm{Op}(\Phi^{t})$, i.e.\ \emph{quantum dynamics emerges
dynamically.}

The final Section~\ref{sec:Informal-discussion} is devoted to an
informal discussion. 

\section{Correlation Functions for Anosov Geodesic Flows.}\label{sec:Correlation-functions}

\subsection{Anosov geodesic flow.}

We now recall the standard contact and symplectic structures underlying
the geodesic flow. Let $(\mathcal{N},g)$ be a smooth closed Riemannian
manifold of dimension $\dim\mathcal{N}=d+1\geq2$. Let 
\[
M=(T^{*}\mathcal{N})_{1}:=\{\,p\in T^{*}\mathcal{N}\;;\;\|p\|_{g}=1\,\}
\]
be the unit cotangent bundle, and let $\pi:M\to\mathcal{N}$ be the
natural projection.

We denote by $\mathcal{A}$ the canonical Liouville one-form on $M$,
defined as follows: for $p\in M$ and a tangent vector $V\in T_{p}M$,
set $q=\pi(p)$. Then $p\in T_{q}^{*}\mathcal{N}$ and $(d\pi)(V)\in T_{q}\mathcal{N}$,
and we set $\mathcal{A}_{p}(V):=p\bigl((d\pi)(V)\bigr)\in\mathbb{R}.$
The form $\mathcal{A}$ is a contact one-form, that is, $d\mathcal{A}$
is symplectic on $\ker\mathcal{A}$. The volume form $dx=(d\mathcal{A})^{d}\wedge\mathcal{A}$
is non-degenerate.

Geometrically, the contact condition means that the distribution of
hyperplanes $\ker\mathcal{A}$ is maximally non-integrable. See Figure~\ref{fig:Anosov-vector-field}. 

\begin{definition}

\label{def:The-geodesic-vector}The \emph{geodesic vector field} $X$
on $M=(T^{*}\mathcal{N})_{1}$ is defined by the conditions 
\begin{equation}
\left(d\mathcal{A}\right)\left(X,.\right)=0,\quad\mathcal{A}\left(X\right)=1,\label{eq:Reeb}
\end{equation}
that is, $X$ is the Reeb (or contact) vector field associated with
$\mathcal{A}$. Viewed as a differential operator, $X$ generates
a flow $\phi^{t}:M\to M$ at time $t\in\mathbb{R}$, defined for all
$u\in C^{\infty}(M)$ by 
\begin{equation}
\frac{d}{dt}\left(u\circ\phi^{t}\right)=X\left(u\circ\phi^{t}\right).\label{eq:def_flow}
\end{equation}

\end{definition}

\begin{remark}The one-form $\mathcal{A}$ is preserved by the geodesic
flow, since its Lie derivative satisfies 
\begin{equation}
\mathcal{L}_{X}\mathcal{A}\underset{\mathrm{Cartan}}{=}d\iota_{X}\mathcal{A}+\iota_{X}d\mathcal{A}\eq{\ref{eq:Reeb}}0.\label{eq:A_preserved}
\end{equation}
Consequently, the associated volume form is also preserved: $\mathcal{L}_{X}\,dx=0.$\end{remark}

\begin{theorem}

\label{thm:-if-}\cite{anosov_67}\cite{katok_hasselblatt}If $(\mathcal{N},g)$
has strictly negative curvature, then the geodesic vector field $X$
is \emph{Anosov}. That is, there exists a (Hölder) continuous splitting
of the tangent bundle, invariant under the differential of the flow
$d\phi^{t}$: 
\begin{equation}
TM=\mathbb{R}X\oplus E_{\mathrm{u}}\oplus E_{\mathrm{s}}.\label{eq:Anosov_Es_Eu}
\end{equation}
Moreover, for $t\gg1$, the action of $d\phi^{t}$ restricted to the
unstable direction $E_{\mathrm{u}}$ is expanding, while its action
restricted to the stable direction $E_{\mathrm{s}}$ is contracting.
In other words, the geodesic flow is \emph{uniformly hyperbolic}.

\end{theorem}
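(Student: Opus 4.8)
The plan is to follow the classical route of Anosov: linearise the geodesic flow along each orbit and read the hyperbolic splitting directly off the Jacobi equation in negative curvature. First I would use the metric $g$ to identify $M=(T^{*}\mathcal{N})_{1}$ with the unit tangent bundle $T^{1}\mathcal{N}$, so that a point is a pair $(q,v)$ with $\|v\|_{g}=1$ and $\phi^{t}(q,v)=(\gamma(t),\dot\gamma(t))$ for the geodesic $\gamma$ with $(\gamma(0),\dot\gamma(0))=(q,v)$. The key observation is that $d\phi^{t}$ is computed by Jacobi fields: a tangent vector $\zeta\in T_{(q,v)}M$ is the same datum as a Jacobi field $J$ along $\gamma$ with $J(0),\,\tfrac{D}{dt}J(0)\in T_{q}\mathcal{N}$ subject to the single constraint $\langle\tfrac{D}{dt}J(0),v\rangle_{g}=0$ that keeps the variation on the unit bundle, and then $d\phi^{t}\zeta$ corresponds to $\bigl(J(t),\tfrac{D}{dt}J(t)\bigr)$. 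Since $\langle J,\dot\gamma\rangle_{g}$ is affine in $t$ and, by the constraint, in fact constant, I would split $J=c\,\dot\gamma+J^{\perp}$ with $c=\langle J(0),v\rangle_{g}$ and $J^{\perp}\perp\dot\gamma$; the line of fields $c\,\dot\gamma$ is exactly $\mathbb{R}X$, and $J^{\perp}$ ranges over the $2d$-dimensional space of perpendicular Jacobi fields, governed by $\tfrac{D^{2}}{dt^{2}}J^{\perp}+R(J^{\perp},\dot\gamma)\dot\gamma=0$. Everything then reduces to splitting this $2d$-dimensional space into $E_{\mathrm{u}}\oplus E_{\mathrm{s}}$.

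Next I would extract the splitting from a convexity estimate. Let $-a^{2}<0$ be a uniform upper bound for the sectional curvature of $(\mathcal{N},g)$, finite because $\mathcal{N}$ is compact, and set $f(t):=\|J^{\perp}(t)\|_{g}^{2}$. Since $J^{\perp}\perp\dot\gamma$ and $\|\dot\gamma\|_{g}=1$, one has $\langle R(J^{\perp},\dot\gamma)\dot\gamma,J^{\perp}\rangle_{g}\le-a^{2}f$, hence
\[
f''=2\bigl\|\tfrac{D}{dt}J^{\perp}\bigr\|_{g}^{2}-2\langle R(J^{\perp},\dot\gamma)\dot\gamma,J^{\perp}\rangle_{g}\ \ge\ 2a^{2}f+2\bigl\|\tfrac{D}{dt}J^{\perp}\bigr\|_{g}^{2}\ \ge\ 0 .
\]
So $f$ is convex and $f'$ is strictly increasing along every nonzero perpendicular Jacobi field; in particular such a field cannot stay bounded both as $t\to+\infty$ and as $t\to-\infty$. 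This lets me define
\[
E_{\mathrm{s}}(q,v):=\{\zeta\in T_{(q,v)}M:\ \|d\phi^{t}\zeta\|\to0\ \text{as}\ t\to+\infty\},\quad E_{\mathrm{u}}(q,v):=\{\zeta:\ \|d\phi^{t}\zeta\|\to0\ \text{as}\ t\to-\infty\},
\]
and to realise them concretely: the stable perpendicular Jacobi field with prescribed value $J^{\perp}(0)=w\perp v$ is obtained as $\lim_{T\to+\infty}J^{(T)}$, where $J^{(T)}$ is the unique perpendicular Jacobi field with $J^{(T)}(0)=w$ and $J^{(T)}(T)=0$ (unique because no conjugate points, itself a consequence of the convexity of $f$); convexity supplies the a priori bounds making this limit exist and depend locally uniformly on $(q,v,w)$ — equivalently one may invoke the stable and unstable solutions of the matrix Riccati equation $U'+U^{2}+R=0$.

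Finally I would assemble the pieces. The convexity dichotomy together with a dimension count on the $2d$-dimensional space of perpendicular Jacobi fields shows $\dim E_{\mathrm{s}}=\dim E_{\mathrm{u}}=d$ and $E_{\mathrm{s}}\cap E_{\mathrm{u}}=\{0\}$, so $TM=\mathbb{R}X\oplus E_{\mathrm{u}}\oplus E_{\mathrm{s}}$. Invariance under $d\phi^{t}$ is immediate since the time-shift of a Jacobi field is again a Jacobi field, so $d\phi^{t}$ carries stable/unstable data to stable/unstable data. Integrating $f''\ge 2a^{2}f$ under the relevant boundedness condition at $\pm\infty$, and using compactness of $M$ to make the constants uniform, yields $\|d\phi^{t}\zeta\|\le Ce^{-a't}\|\zeta\|$ for $\zeta\in E_{\mathrm{s}}$, $t\ge0$, and the symmetric bound for $E_{\mathrm{u}}$ as $t\to-\infty$, with $a'>0$; this is the uniform hyperbolicity. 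Continuity of the splitting follows from the locally uniform convergence of the $J^{(T)}$ (or from a standard invariant-cone argument), and the Hölder regularity — the one genuinely subtle point, governed by the bunching of the contraction and expansion rates — I would simply quote from \cite{anosov_67,katok_hasselblatt}. The hard part of the scheme is the convexity/limit step: showing that the limiting stable and unstable Jacobi fields exist, span $d$-dimensional subspaces, and are transverse is exactly where the \emph{strict}, and via compactness uniform, negativity of the curvature is indispensable — for nonpositive curvature with flat pieces the convexity degenerates and the hyperbolic splitting can break down.
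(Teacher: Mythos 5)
The paper offers no proof of this theorem; it is quoted as a classical result with references to Anosov and Katok--Hasselblatt, and your Jacobi-field argument (identification of $d\phi^{t}$ with the Jacobi equation, convexity of $\|J^{\perp}\|^{2}$ under a uniform bound $K\le-a^{2}<0$, construction of the stable/unstable fields as limits of the boundary-value problems $J^{(T)}$ or via the Riccati equation, dimension count, and compactness for uniform constants) is precisely the standard proof found in those references. Your plan is sound, and you are right to flag the Hölder regularity of the splitting as the one point that requires a separate argument beyond this scheme.
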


\begin{remark}\label{rem:butterfly} Theorem~\ref{thm:-if-} expresses
the property usually called \emph{sensitivity to initial conditions},
often referred to as the \href{https://en.wikipedia.org/wiki/Butterfly_effect}{Butterfly effect}.
As a consequence, each trajectory has its own ``unique story,''
when considered over the full time axis $t\in\mathbb{R}$ (past and
future).

\end{remark}

\begin{figure}
\centering{}\scalebox{0.7}[0.7]{\input{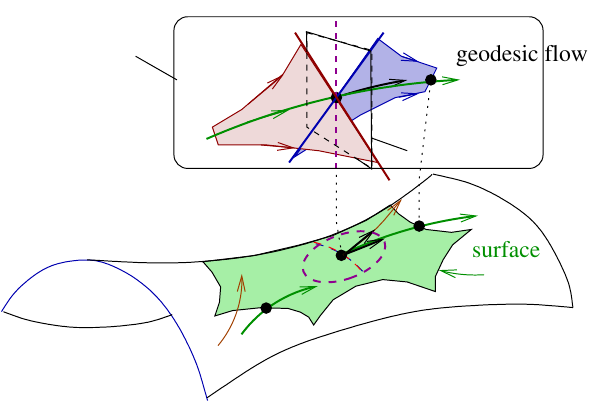tex_t}}\caption{Anosov vector field $X$ on $M=(T^{*}\mathcal{N})_{1}$.}\label{fig:Anosov-vector-field}
\end{figure}

\subsection{Correlation functions.}

As observed in Figure~\ref{fig:Geodesic-trajectories-on}(a), an
individual trajectory appears unpredictable (chaotic). This is a consequence
of the instability of each trajectory, as stated in Theorem~\ref{thm:-if-}.
To overcome this difficulty, one may regard a point as a Dirac measure
and, more generally, consider the evolution of functions or distributions,
as illustrated in Figure~\ref{fig:Geodesic-trajectories-on}(c).

The action of the flow $\phi^{t}$ on functions is given by the pullback
operator. From \eqref{eq:def_flow}, this defines a continuous group
on $L^{2}(M,dx)$ with generator $X$, which we write \footnote{It is equivalent to consider the pushforward operator $(e^{tX})^{\dagger}=e^{-tX},$which
is the $L^{2}$-adjoint with respect to the invariant Liouville measure
$dx$ on $M$. The operator $e^{-tX}$ is also called the \emph{transfer
operator} (or Perron--Frobenius operator), while $e^{tX}$ is known
as the \emph{Koopman operator}. Notice that $e^{-tX}$ governs the
evolution of probability measures, in particular Dirac measures on
points: $e^{-tX}\,\delta_{x}\;=\;\delta_{\phi^{t}(x)}.$}:
\begin{equation}
e^{tX}:\begin{cases}
C^{\infty}\left(M\right) & \rightarrow C^{\infty}\left(M\right)\\
u & \mapsto u\circ\phi^{t}
\end{cases}\label{eq:def_pullback}
\end{equation}

Figures~\ref{fig:Pull-back-operator} and~\ref{fig:Geodesic-trajectories-on}(c)
illustrate the effect of the operator $e^{tX}$ on a function $u\in C^{\infty}(M)$
with small support. From these observations, one sees that for large
time $t$ the evolved function $e^{tX}u$ develops structures at very
fine scales (i.e.\ high spatial frequencies).

If we choose to disregard these fine details and observe only the
behavior that remains at the macroscopic scale, it is natural to consider
the scalar product of $e^{tX}u$ against a smooth function $v\in C^{\infty}(M)$,
viewed as an ``observable'' or a ``test function.'' The objective
is then to understand the asymptotics, as $t\to\pm\infty$, of the
function defined by 
\[
\text{(correlation function)}\qquad C_{u,v}(t):=\langle v,e^{tX}u\rangle_{L^{2}(M,dx)}.
\]

\begin{center}
\begin{figure}
\begin{centering}
\scalebox{0.7}[0.7]{\input{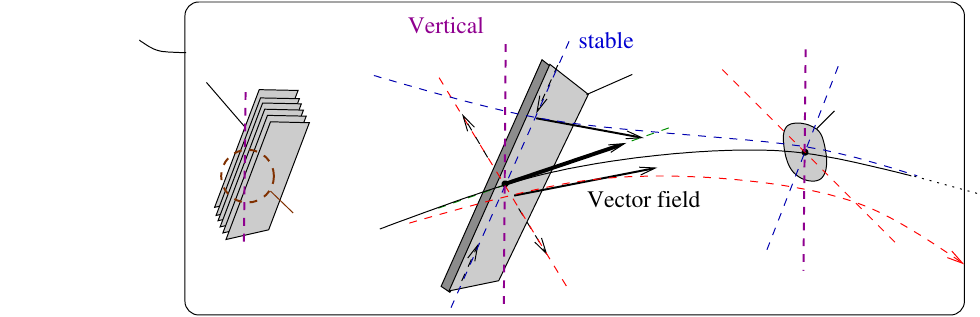tex_t}}
\par\end{centering}
\caption{Action of the pullback operator $e^{tX}$ on a smooth function $u\in C^{\infty}(M)$.}\label{fig:Pull-back-operator}
\end{figure}
\par\end{center}

\subsection{Mixing.}

\begin{theorem}

\label{thm:If--has}\cite{anosov_67}\cite{liverani_contact_04}If
$(\mathcal{N},g)$ has negative curvature $\kappa<0$, then the geodesic
flow is \emph{(exponentially) mixing}. That is, for all $u,v\in C^{\infty}(M)$
one has
\begin{equation}
\langle v,e^{tX}u\rangle_{L^{2}\left(M\right)}\eq{t\gg1}\langle v,\Pi_{0}u\rangle_{L^{2}}+R\left(t\right)\label{eq:mixing}
\end{equation}
where $\Pi_{0}$ is the rank-one projector onto constant functions,
defined by 
\[
\Pi_{0}f\;=\;\boldsymbol{1}\,\Big\langle\tfrac{1}{\mathrm{Vol}(M)}\boldsymbol{1},f\Big\rangle_{L^{2}(M)}.
\]
Here the remainder satisfies $R(t)=o_{u,v}(1)$ as $t\to\infty$ (Anosov~\cite{anosov_67}).
Moreover, Liverani~\cite{liverani_contact_04} proved the sharper
estimate: there exists $\epsilon>0$ such that, for all $u,v$, $R(t)=O_{u,v}\!\left(e^{-\epsilon t}\right).$

\end{theorem}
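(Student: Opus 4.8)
The plan is to deduce the asymptotics \eqref{eq:mixing} from the spectral structure of the generator $X$, turning the large-$t$ behaviour of $C_{u,v}(t)$ into a contour-deformation argument for its Laplace transform. The difficulty is that on $L^{2}(M,dx)$ the Koopman group $e^{tX}$ is unitary, so $X$ has purely continuous spectrum on $i\mathbb{R}$ and its resolvent does not continue past that line. First I would therefore build, via the escape-function method of microlocal analysis, a scale of anisotropic Sobolev spaces $\mathcal{H}^{sG}$ with $C^{\infty}(M)\subset\mathcal{H}^{sG}\subset\mathcal{D}'(M)$, constructed from a weight on $T^{*}M$ that is monotone along the lifted flow on $T^{*}M$, so that the space tolerates the singularities the flow produces along the conormal $E_{u}^{*}$ of the weak-stable bundle $E_{s}\oplus\mathbb{R}X$ while demanding extra smoothness along the conormal $E_{s}^{*}$ of the weak-unstable bundle $E_{u}\oplus\mathbb{R}X$. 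On such a space $(z-X)^{-1}$ extends meromorphically to a half-plane $\{\operatorname{Re}z>-c(s)\}$ with poles of finite rank — the discrete Pollicott--Ruelle spectrum — and, since for $u,v\in C^{\infty}(M)$ the correlation $C_{u,v}(t)$ is intrinsically defined, its Laplace transform $\widehat{C}_{u,v}(z)=\langle v,(z-X)^{-1}u\rangle$ inherits this meromorphic continuation with poles among the resonances.

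Second I would identify the leading resonance. Since $X\mathbf{1}=0$, the point $z=0$ is a resonance. Ergodicity of the geodesic flow for the Liouville measure — which follows from the Hopf argument together with the absolute continuity of the stable and unstable foliations — forces the resonant state at $z=0$ to be the constant $\mathbf{1}$ and the co-resonant state to be the Liouville measure $dx$; because these pair non-trivially, $\int_{M}\mathbf{1}\,dx=\operatorname{Vol}(M)\neq0$, the resonance is semisimple (no Jordan block) and the residue of $(z-X)^{-1}$ at $z=0$ is precisely the rank-one projector $\Pi_{0}$ of the statement.

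Third, and this is the heart of the matter, I would establish a genuine spectral gap: there is $\epsilon>0$ with no resonance in $\{-\epsilon\le\operatorname{Re}z\le0\}\setminus\{0\}$, together with a polynomial resolvent bound $\|(z-X)^{-1}\|_{\mathcal{H}^{sG}\to\mathcal{H}^{sG}}=O(\langle\operatorname{Im}z\rangle^{N})$ on the line $\operatorname{Re}z=-\epsilon$. Negative curvature alone is not enough at the level of soft arguments; here the contact structure is essential, namely that $d\mathcal{A}$ is symplectic on $\ker\mathcal{A}=E_{s}\oplus E_{u}$, which makes $E_{s}\oplus E_{u}$ maximally non-integrable. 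This uniform non-integrability makes the temporal distance function — the obstruction to closing up composed stable and unstable holonomies — non-degenerate, which is exactly the input of a Dolgopyat-type oscillatory-integral estimate: iterating the transfer operator over long times, decomposed into contributions of distinct orbit segments, produces rapidly oscillating phases with non-degenerate stationary points, hence cancellation, hence a uniform bound on $(z-X)^{-1}$ in a neighbourhood of $i\mathbb{R}\setminus\{0\}$ and controlled growth for $|\operatorname{Im}z|$ large (this is Liverani's contact argument, and its microlocal reformulation via semiclassical propagation estimates and the band structure). I expect this step — extracting the non-degeneracy of the relevant phase from the symplectic form on $\ker\mathcal{A}$ and feeding it into the oscillatory-integral bound uniformly in $\operatorname{Im}z$ — to be the main obstacle, since for Anosov flows lacking such transversality exponential mixing can genuinely fail.

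Finally I would run the contour shift. For $a>0$ large,
\[
C_{u,v}(t)=\frac{1}{2\pi i}\int_{\operatorname{Re}z=a}e^{zt}\,\widehat{C}_{u,v}(z)\,dz,
\]
and pushing the contour to $\operatorname{Re}z=-\epsilon$ picks up the single residue at $z=0$, contributing $\langle v,\Pi_{0}u\rangle_{L^{2}}$, while the remaining integral over $\operatorname{Re}z=-\epsilon$ is $O_{u,v}(e^{-\epsilon t})$ once $u$ and $v$ are taken smooth enough to absorb the polynomial growth of the resolvent in $\operatorname{Im}z$, the loss of derivatives being controlled by $N$ and $\epsilon$. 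This gives Liverani's estimate $R(t)=O_{u,v}(e^{-\epsilon t})$; the weaker $R(t)=o_{u,v}(1)$ of Anosov then follows a fortiori, or independently from the fact that the Anosov geodesic flow has the $K$-property, hence Lebesgue spectrum on $L^{2}(M)\ominus\mathbb{C}$, whence mixing by the Riemann--Lebesgue lemma.
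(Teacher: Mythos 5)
Your proposal is sound in outline, but you should be aware that the paper does not prove Theorem~\ref{thm:If--has} at all: it is quoted with attributions to Anosov (qualitative mixing) and Liverani (exponential rate), and the surrounding sections merely explain the \emph{mechanism} — an isolated, simple Pollicott--Ruelle eigenvalue at $z=0$ separated from the rest of the spectrum, cf.\ \eqref{eq:mixing_terms} and Theorem~\ref{thm:tsujii}. Your four steps (anisotropic spaces and meromorphic continuation of the resolvent; identification of the residue at $z=0$ with $\Pi_{0}$; a gap plus resolvent bounds in a strip, driven by the contact structure; contour deformation of the Laplace transform of $C_{u,v}$) are exactly that mechanism made explicit, so at the level of strategy you are reproducing the route the paper advocates rather than diverging from it. One organizational remark: in the paper's framework the strip estimate does not have to come from a Dolgopyat bound on the whole line $\Re z=-\epsilon$; Tsujii's quasi-compactness result already places all but finitely many eigenvalues to the left of $\gamma_{0}^{+}<0$ with a uniform resolvent bound, so the only remaining task is to exclude nonzero eigenvalues on $i\mathbb{R}$ among a finite set. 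Either way the contact hypothesis is the essential input, as you correctly stress.

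Two places in your sketch hide real content. First, in step two you invoke ergodicity to identify the resonant state at $z=0$ with the constant function, but ergodicity is a statement about $L^{2}$ (or measurable) invariant functions, whereas a resonant state is a priori only a distribution in $\mathcal{H}_{W}$ with wavefront set in $E_{u}^{*}$; you need the intermediate (nontrivial) lemma that for a flow preserving a smooth measure, any resonant state with $\Re z=0$ actually lies in $L^{2}$, after which unitarity of $e^{tX}$ on $L^{2}$ and weak mixing of the Liouville measure rule out both nonconstant states at $z=0$ and resonances at $z=i\lambda$, $\lambda\neq 0$. Without that lemma the argument is circular (you would be using mixing to prove mixing). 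Second, the entire quantitative content of Liverani's theorem sits in your step three, which you describe but do not execute; identifying the temporal distance function and the non-degeneracy supplied by $d\mathcal{A}$ on $\ker\mathcal{A}$ is the correct heuristic, but it is precisely the part that neither this survey nor your proposal actually proves. As a proof \emph{plan} this is acceptable and honestly flagged; as a proof it is incomplete exactly where the theorem is hard.
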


\begin{remark}\label{rem:mixing} The mixing property~\eqref{eq:mixing}
explains the convergence towards equilibrium observed in Figures~\ref{fig:Pull-back-operator}
and~\ref{fig:Geodesic-trajectories-on}. Indeed, it shows that after
long times, the test function $v$ detects only the constant function
$\boldsymbol{1}\;\times\;\Big\langle\tfrac{1}{\mathrm{Vol}(M)}\boldsymbol{1},\,u\Big\rangle_{L^{2}(M)}.$
More precisely, it asserts that (in a weak sense) the pullback operator
converges to the rank-one projector: 
\[
e^{tX}\;\;\stackrel[t\to+\infty]{\mathrm{weak}}{\longrightarrow}\;\Pi_{0}.
\]
\end{remark}

\subsection{Pollicott-Ruelle resonances and eigenvalues.}\label{subsec:Pollicott-Ruelle-resonances}

Let us consider the line vector bundle $\mathcal{F}_{0}=\bigl|\det E_{s}\bigr|^{-1/2}\;\to\;M,$that
is, the dual of half-densities on the stable bundle $E_{s}$ (see~\eqref{eq:Anosov_Es_Eu}).
The vector field $X$ extends naturally as a differential operator
$X_{\mathcal{F}_{0}}$ on the space of sections of $\mathcal{F}_{0}$.
We then define 
\begin{equation}
\gamma_{0}^{+}:=\lim_{t\rightarrow+\infty}\log\left\Vert e^{tX_{\mathcal{F}_{0}}}\right\Vert _{L^{\infty}\left(M;\mathcal{F}_{0}\right)}^{1/t}<0.\label{eq:def_gamma_0+}
\end{equation}

For a hyperbolic surface, one has the explicit value $\gamma_{0}^{+}=-\tfrac{1}{2},$
see Section~\ref{sec:Pollicott-Ruelle-spectrum}.

Tsujii~\cite{tsujii_08} improved Theorem~\ref{thm:If--has} by
showing that, up to the decay rate $e^{\gamma_{0}^{+}t}$ (not optimal
in general), one has a finite expansion of $e^{tX}$ over finite-rank
spectral projectors. This means that there exist operators $\Pi_{j}:C^{\infty}(M)\longrightarrow\mathcal{S}'(M),\quad j\geq0,$satisfying
$\Pi_{j}\Pi_{k}=\delta_{j=k}\,\Pi_{j},$$\Pi_{j}X=X\Pi_{j}.$ Assuming,
for simplicity, that there are no Jordan blocks, this expansion reads:
$\forall u,v\in C^{\infty}\left(T_{1}^{*}\mathcal{N}\right),\forall\epsilon>0,\exists J_{\epsilon}$
such that
\begin{equation}
\langle v,e^{tX}u\rangle_{L^{2}\left(M\right)}\eq{t\gg1}\langle v,\sum_{j=0}^{J_{\epsilon}}e^{z_{j}t}\Pi_{j}u\rangle_{L^{2}}+O_{u,v}\left(e^{\left(\gamma_{0}^{+}+\epsilon\right)t}\right)\label{eq:mixing_terms}
\end{equation}
with $z_{0}=0$, which already appears in~\eqref{eq:mixing} and
corresponds to the mixing term. Possibly, only one or finitely many
projectors occur. The other resonances lie in the strip 
\[
\gamma_{0}^{+}<\Re(z_{j})<0,\qquad j\geq1,
\]
see Figure~\ref{fig:Band-structure}(a).

For hyperbolic surfaces, expansion~\eqref{eq:mixing_terms} was proved
by Ratner~\cite{ratner_87}. For the Bolza hyperbolic surface, there
are no eigenvalues with $\Re(z)>-\tfrac{1}{2}=\gamma_{0}^{+}$, except
the leading one $z_{0}=0$.

An equivalent formulation of~\eqref{eq:mixing_terms} is that the
resolvent 
\[
z\in\{\Re(z)>0\}\;\longmapsto\;(z-X)^{-1}:C^{\infty}(M)\to\mathcal{S}'(M)
\]
admits a meromorphic continuation to the half-plane $\{\Re(z)>\gamma_{0}^{+}\}$,
with poles of finite rank at $z\in\{z_{j}\}_{j}$. These poles are
called \emph{Pollicott--Ruelle resonances}, see Figure~\ref{fig:Band-structure}.

In fact, Tsujii’s result given in the next theorem is stronger: these
poles (or resonances) $(\lambda_{j})_{j}$ are discrete Pollicott--Ruelle
eigenvalues of the geodesic vector field $X$ in a suitably defined
anisotropic Sobolev space $\mathcal{H}(M)$ of distributions containing
$C^{\infty}(M)$. That is, 
\[
z\mapsto(z-X)^{-1}:\mathcal{H}(M)\rightarrow\mathcal{H}(M)
\]
is bounded except at a discrete set of poles, where the associated
spectral projectors have finite rank. 

\begin{theorem}

\label{thm:tsujii} \cite[thm 1.1]{tsujii_08}There exists an anisotropic
Hilbert space 
\[
C^{\infty}(M)\;\subset\;\mathcal{H}_{W}(M)\;\subset\;\mathcal{S}'(M),
\]
and a family of finite-rank, bounded spectral projectors $(\Pi_{j})_{j\geq0}$
such that, for every $\epsilon>0$, there exists $J_{\epsilon}$ with
\begin{equation}
\left\Vert e^{tX}-\sum_{j=0}^{J_{\epsilon}}e^{tX}\Pi_{j}\right\Vert _{\mathcal{H}_{W}\left(M\right)}\leq C_{\epsilon}e^{t\left(\gamma_{0}^{+}+\epsilon\right)}.\label{eq:expansion_tsujii}
\end{equation}

\end{theorem}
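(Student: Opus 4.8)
The plan is to realize $X$ as a \emph{quasi-compact} operator by performing microlocal surgery adapted to the wavefront dynamics of the pullback $e^{tX}$. Dualizing the Anosov splitting \eqref{eq:Anosov_Es_Eu} gives $T^{*}M=\mathbb{R}\mathcal{A}\oplus E_{u}^{*}\oplus E_{s}^{*}$, with $E_{u}^{*}:=\mathrm{Ann}(\mathbb{R}X\oplus E_{u})$ and $E_{s}^{*}:=\mathrm{Ann}(\mathbb{R}X\oplus E_{s})$; the characteristic set of $X$, i.e. the zero set of its symbol $p(x,\xi)=\langle\xi,X(x)\rangle$, is $\Sigma:=E_{u}^{*}\oplus E_{s}^{*}$, and off $\Sigma$ the operator $X$ is elliptic. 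The Hamiltonian lift $\Phi^{t}$ of $\phi^{t}$ to $\Sigma$ has, at fiber infinity, $E_{u}^{*}$ as a \emph{radial sink} and $E_{s}^{*}$ as a \emph{radial source}, with every other bicharacteristic running from the source to the sink as $t\to+\infty$ — this is exactly $WF(e^{tX}u)\to E_{u}^{*}$ from Figure~\ref{fig:Pull-back-operator}. First I would fix an order function $m\in C^{\infty}(T^{*}M)$, homogeneous of degree $0$ in the fibers for $|\xi|\ge 1$, with $m\equiv-N$ near $E_{u}^{*}$, $m\equiv+N$ near $E_{s}^{*}$, and $H_{p}m\le 0$ along $\Phi^{t}$; quantizing $G:=m(x,\xi)\log\langle\xi\rangle$ to a variable-order pseudodifferential operator $A=\mathrm{Op}(e^{G})$, I set $\mathcal{H}_{W}:=A^{-1}L^{2}(M)$ with $\|u\|_{\mathcal{H}_{W}}:=\|Au\|_{L^{2}}$, so that $C^{\infty}(M)\subset\mathcal{H}_{W}\subset\mathcal{S}'(M)$.

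Next I would conjugate the generator. By the symbolic calculus $P:=AXA^{-1}$ on $L^{2}(M)$ has the form $X+\mathrm{Op}(r)+(\text{smoothing})$, where the self-adjoint part $\tfrac12(P+P^{*})$ has principal symbol $(H_{p}m)\,\log\langle\xi\rangle+O(1)\le C$, the gain $H_{p}m$ being strictly negative on $\Sigma$ away from the two radial sets. Combining (i) the elliptic estimate off $\Sigma$, (ii) the radial-source estimate at $E_{s}^{*}$ (available because $m=+N$ there), (iii) the radial-sink estimate at $E_{u}^{*}$ (available because $m=-N$ there), and (iv) propagation of singularities along $\Sigma$ from source to sink, the Faure--Sjöstrand / Dyatlov--Zworski argument shows that $z-X$ is Fredholm of index $0$ on $\mathcal{H}_{W}$ for all $z$ in $\{\Re z>\gamma_{0}^{+}+\epsilon\}$ and invertible for $\Re z\gg1$. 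Analytic Fredholm theory then yields a meromorphic continuation of $(z-X)^{-1}$ to that half-plane with finite-rank poles $z_{j}$ (the Pollicott--Ruelle eigenvalues), and $\Pi_{j}:=\tfrac{1}{2\pi i}\oint_{z_{j}}(z-X)^{-1}\,dz$ are the asserted finite-rank spectral projectors, with $\Pi_{j}\Pi_{k}=\delta_{jk}\Pi_{j}$ and $X\Pi_{j}=\Pi_{j}X$.

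To get the semigroup bound \eqref{eq:expansion_tsujii} I would first verify that $X$ generates a strongly continuous semigroup on $\mathcal{H}_{W}$ (Hille--Yosida, using that $P-C_{0}$ is dissipative for large $C_{0}$), so that $e^{tX}$ admits the inverse-Laplace representation $\tfrac{1}{2\pi i}\int_{\Re z=c}e^{tz}(z-X)^{-1}\,dz$. Shifting the contour down to $\Re z=\gamma_{0}^{+}+\epsilon$ collects the residues, which are precisely $e^{tX}\Pi_{j}$ for the finitely many $z_{j}$ with $\Re z_{j}>\gamma_{0}^{+}+\epsilon$, while the leftover vertical integral is $O_{\epsilon}(e^{t(\gamma_{0}^{+}+\epsilon)})$ once one knows that $\|(z-X)^{-1}\|_{\mathcal{H}_{W}}$ grows at most polynomially along that line — a bound furnished by the high-frequency control of the escape function together with a routine resolvent estimate at bounded frequency. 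This gives $\|e^{tX}-\sum_{j}e^{tX}\Pi_{j}\|_{\mathcal{H}_{W}}\le C_{\epsilon}e^{t(\gamma_{0}^{+}+\epsilon)}$.

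The main obstacle is that, run naively, this scheme only produces a gap proportional to the order $N$ and to the weakest hyperbolicity rates, and $N\to\infty$ does not by itself recover the sharp threshold $\gamma_{0}^{+}$ of \eqref{eq:def_gamma_0+}. Two refinements close the gap. First, one works not with scalar distributions but with sections of the line bundle $\mathcal{F}_{0}=|\det E_{s}|^{-1/2}$, chosen so that the transverse (stable) part of the dynamics acting on microlocal distributions based at $E_{u}^{*}$ has leading exponential growth exactly $\gamma_{0}^{+}$; the contact hypothesis — nondegeneracy of $d\mathcal{A}$ on $\ker\mathcal{A}$ — enters here, pairing $E_{u}$ with $E_{s}$ and making this half-density normalization canonical. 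Second, one needs an \emph{optimal} estimate localized near the radial sink $E_{u}^{*}$, splitting the directions transverse to the flow into their stable and unstable components, to show that the microlocal contribution of frequencies near $E_{u}^{*}$ to $e^{tX}$ on $\mathcal{H}_{W}$ is bounded by $C_{\epsilon}e^{t(\gamma_{0}^{+}+\epsilon)}$. This sharp localized estimate is where essentially all the difficulty lies; Tsujii's original proof carries it out in suspension coordinates via an explicit analysis of the iterated transfer operator rather than through the Fredholm machinery sketched above.
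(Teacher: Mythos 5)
This theorem is quoted from Tsujii's paper rather than proved here, so the relevant comparison is with the mechanism sketched in Section~\ref{sec:Anosov-geodesic-flow}. Your route is the Faure--Sj\"ostrand / Dyatlov--Zworski one: variable-order escape function, radial source/sink estimates at the dual bundles at fiber infinity, Fredholm theory for $z-X$ on $\mathcal{H}_W$, then an inverse-Laplace contour shift. The survey instead works with a wave-packet (FBI) transform on $T^{*}M$ and treats the lifted flow as scattering off the \emph{symplectic, normally hyperbolic} trapped set $\Sigma=\mathbb{R}\mathcal{A}$ (note your $\Sigma$ is the characteristic set $E_u^{*}\oplus E_s^{*}$, a different object and a notational clash with the paper); the sharp rate $\gamma_0^{+}$ then comes out of the metaplectic factorization \eqref{eq:factorization} and the transverse model \eqref{eq:Op_weyl}, whose $k=0$ component is exactly $|\det(d\phi^{t}_{/E_s})|^{1/2}$, i.e.\ $e^{tX_{\mathcal{F}_0}}$. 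You correctly diagnose that the naive radial-estimate machinery only yields a gap of size $O(N\lambda_{\min})$ and that the whole difficulty is the sharp transverse estimate near the attracting dual bundle; your invocation of $\mathcal{F}_0=|\det E_s|^{-1/2}$ is the right object, though it enters as the leading transverse mode of the scalar operator, not as a change of the operator to a bundle-valued one.

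One step as written is soft to the point of circularity: the contour deformation to $\Re z=\gamma_0^{+}+\epsilon$ presupposes that $(z-X)^{-1}$ extends to that line with only finitely many poles to its right \emph{and} is uniformly (not merely polynomially) bounded there with enough decay in $\Im z$ for the vertical integral to converge --- but that uniform bound is precisely the content of the theorem (cf.\ Remark~\ref{rem:resolvent-bounded}, which states the equivalence). The non-circular order of operations, and the one Tsujii follows, is to prove the Lasota--Yorke / quasi-compactness bound on $e^{tX}$ itself, i.e.\ that its essential spectral radius on $\mathcal{H}_W$ is at most $e^{t(\gamma_0^{+}+\epsilon)}$; the expansion \eqref{eq:expansion_tsujii} is then immediate from the spectral theory of quasi-compact semigroups, with no infinite contour integral needed. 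With that reordering, and the sharp transverse estimate you have (honestly) black-boxed, your outline is a faithful alternative to the paper's.
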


\begin{remark}\label{rem:resolvent-bounded} Expansion~\eqref{eq:expansion_tsujii}
implies that, for every $\epsilon>0$, there exists $C_{\epsilon}>0$
such that for all $z\in\mathbb{C}$ with $|\Im z|>C_{\epsilon}$ and
$\Re(z)>\gamma_{0}^{+}+\epsilon$, one has 
\[
\bigl\|(z-X)^{-1}\bigr\|_{\mathcal{H}_{W}(M)}<C_{\epsilon}.
\]
In other words, the resolvent is uniformly bounded to the right of
$\gamma_{0}^{+}$, except in neighborhoods of a finite number of eigenvalues
(see Figure~\ref{fig:Band-structure}(a)). This property is equivalent
to discreteness of the eigenvalues (or quasi-compactness) of
\[
e^{tX}:\mathcal{H}_{W}(M)\rightarrow\mathcal{H}_{W}(M)
\]
in the region $\{|z|>e^{t(\gamma_{0}^{+}+\epsilon)}\}$, for $t>0$.\end{remark}

It has been possible to extend the analysis further to the left in
the complex spectral plane: Butterley and Liverani~\cite{liverani_butterley_07}
proved that the resolvent 
\[
z\in\mathbb{C}\;\rightarrow\;(z-X)^{-1}:C^{\infty}(M)\to\mathcal{S}'(M)
\]
admits a meromorphic continuation to the whole complex plane~$\mathbb{C}$.
The poles of this continuation are the Pollicott--Ruelle resonances.
More precisely they showed that for any $C>0$, there exists an anisotropic
Sobolev space (or Banach space) $C^{\infty}(M)\subset\mathcal{H}_{W}(M)\subset\mathcal{S}'(M)$
such that 
\[
X:\mathcal{H}_{W}(M)\rightarrow\mathcal{H}_{W}(M)
\]
has a discrete Pollicott--Ruelle spectrum in the half-plane $\Re(z)>-C$.
In this framework, the Pollicott--Ruelle resonances are realized
as genuine eigenvalues of $X$ on $\mathcal{H}_{W}(M)$. These results
have been obtained after using microlocal analysis \cite{fred_flow_09}.

By contrast, there are models where one knows that the resolvent admits
a meromorphic extension (hence discrete resonances), but it is not
known whether these resonances can be realized as eigenvalues. Examples
include the Laplacian $\Delta$ on the Poincaré disk $\mathbb{D}^{2}$,
or cusps in \cite{bonthonneau2017ruelle}. 

However, these results do not provide much information on the existence
or on the precise location of the discrete Pollicott--Ruelle spectrum.
In particular, one cannot deduce an expansion such as \eqref{eq:mixing_terms}
or \eqref{eq:expansion_tsujii} for correlation functions with a remainder
term bounded by $O\left(e^{-Ct}\right),$for arbitrarily large $C>0$.

\subsection{Resonances and the Pollicott--Ruelle Discrete Spectrum in Anisotropic
Sobolev and Banach Spaces.}

Before continuing, let us give a brief and partial overview of works
and progress on resonances. A recent reference on the subject is the
book by Dyatlov and Zworski~\cite{dyatlov2019mathematical}.

As defined above in~\eqref{eq:mixing_terms}, resonances appear as
the discrete leading contributions to correlation functions. This
concept originates in physics, from the scattering of waves: for instance,
acoustic resonances in Helmholtz resonators (1863), resonances in
electromagnetism such as fluorescence, and in nuclear physics through
radioactivity with the introduction of resonant ``Gamow states''
by Gamow~ and Breit--Wigner, characterized by complex energies $E=E_{r}-i\Gamma$
and by poles of the scattering matrix.

In mathematics, starting in the 1960s, Lax and Phillips~\cite{lax1990scattering}
and others developed a rigorous framework for scattering theory and
resonances. In the early 1970s, Aguilar--Combes and Balslev--Combes
introduced the \emph{complex scaling method} \cite{aguilar_combes_71,balslev_combes_71},
which shows that on Euclidean spaces, scattering resonances can be
realized as eigenvalues in certain ``anisotropic analytic spaces.''
This method was subsequently extended in a microlocal framework by
Helffer and Sjöstrand in the 1980s~\cite{sjostrand_87}, and further
by Gérard and Martinez in the 1980s--1990s~\cite{gerard1989prolongement},
among others.

These results concern mainly the wave equation, Schrödinger-type equations,
and other PDEs in open settings or with scattering phenomena, in particular
scattering in phase space on a trapped set (see~\cite{sjostrand_87}).

In the domain of hyperbolic (Anosov) dynamical systems, the situation
is analogous, since one has scattering phenomena in the cotangent
bundle on the trapped set~\cite{fred-roy-sjostrand-07}. Ruelle~\cite{Ruelle_76,ruelle_86}
and Pollicott~\cite{pollicott1986meromorphic} introduced the concept
of Pollicott--Ruelle (PR) resonances for correlation functions.

Subsequently, many results have progressively developed the theory
of anisotropic Sobolev or Banach spaces, in which the dynamics is
modeled by an operator with discrete PR spectrum. Contributions include
the works of Rugh~\cite{rugh_92,rugh_1996}, Blank, Keller and Liverani~\cite{liverani_02},
Baladi and Tsujii~\cite{baladi_livre_00,Baladi05,baladi_05,Baladi-Tsujii08},
Butterley, Liverani and Tsujii~\cite{liverani_tsujii_06,liverani_butterley_07,tsujii_08,tsujii_FBI_10},
Gouëzel and Liverani~\cite{gouezel_2008}, Naud \cite{naud_08,naud_density_14,naud_expanding_2005},
Faure, Roy and Sjöstrand~\cite{fred-roy-sjostrand-07,fred_flow_09},
Nonnenmacher and Zworski~\cite{nonenmacher_zworski_2013}, Datchev,
Dyatlov and Zworski~\cite{dyatlov_Ruelle_resonances_2012,dyatlov_resonance_band_2013,dyatlov_zworski_zeta_2013,dyatlov2015stochastic,dyatlov2017ruelle,dyatlov2019mathematical,dyatlov_2021_pollicott_ruelle_sobolev},
Dyatlov and Guillarmou~\cite{dyatlov_guillarmou_2014}, Bonthonneau~\cite{bonthonneau2018flow},
Bonthonneau and Jézéquel for Gevrey Anosov dynamics with FBI transform~\cite{bonthonneau2020fbi},
and Jin, Tao and Zworski~\cite{jin2025counting,jin_zworski_local_trace_14},
among many others. This vast body of work has established the PR spectrum
as the natural spectral invariant of Anosov flows.

Dolgopyat~\cite{dolgopyat_98,dolgopyat_02} developed a very powerful
approach for partially hyperbolic systems (see also~\cite{liverani_contact_04}),
based on oscillatory cancellation techniques for transfer operators
which has proved extremely useful and has led to strong results, see
e.g.~\cite{tsujii_2016_exponential_mixing,tsujii2020smooth}.

Specific models have been investigated in various settings, such as
homogeneous spaces by Dyatlov, Faure, Guillarmou, Hilgert, Weich,
Wolf, and Küster~\cite{dyatlov_faure_guillarmou_2014,guillarmou_weich_resonances_16,hilgert2021higher,kuster2017quantum},
Morse--Smale flows by Dang and Rivière~\cite{dang_riviere_morse_smale_16,dang2017pollicott,dang2020spectralI,dang2020spectralII},
and dispersive billiards by Baladi, Demers and Liverani, Chaubet and
Petkov~\cite{baladi2018exponential,baladi2020thermodynamic,chaubet2022dynamical}.
There have also been important applications, including Fried's conjecture
and related topological aspects (Chaubet, Dang, Dyatlov, Guillarmou,
Rivière, Shen, Zworski, Küster and Weich)~\cite{dang2020fried,dyatlov2017ruelle,kuster2020pollicott,chaubet2024dynamical},
the marked length spectrum and geometric inverse problems, as well
as the X-ray transform and rigidity, studied by Guillarmou, Lefeuvre,
and Gouëzel~\cite{guillarmou2019marked,gouezel2021classical,lefeuvre2025microlocal},
counting closed trajectories \cite{liverani_giulietti_2012},\cite{chaubet2024closed}. 

In particular, the question of the band structure of the Ruelle spectrum
and the emergence of quantum dynamics, which is the focus of this
paper, was already investigated in~\cite{fred-PreQ-06} for a contact
extension of the cat map and in~\cite{faure-tsujii_prequantum_maps_12}
in a more general setting. In~\cite{faure-tsujii_anosov_flows_13},
the associated semiclassical zeta function is studied. More recently,
Cekic and Guillarmou~\cite{guillarmou2020band} proved the existence
of the first band for contact Anosov flows in dimension three using
horocycle operators. Below, we give details on the paper~\cite{faure-tsujii_anosov_flows_16}.

For the special case of hyperbolic manifolds endowed with an algebraic
structure, i.e. high symmetry described by Lie groups, the band spectrum
of Anosov dynamics has been studied in~\cite{dyatlov_faure_guillarmou_2014,guillarmou_weich_resonances_16,hilgert2021higher}.
In different contexts, band structure and Weyl laws for the spectrum
of resonances were investigated by Stefanov and Vodev~\cite{stefanov1995distribution},
Sjöstrand and Zworski~\cite{sjostrand1999asymptotic} for convex
obstacles, and by Dyatlov~\cite{dyatlov_resonance_band_2013} for
regular normally hyperbolic trapped sets.

\subsection{Band structure of the Pollicott Ruelle spectrum.}

In Section~\ref{subsec:Pollicott-Ruelle-resonances} we recalled
that a general Anosov vector field $X$ has a discrete spectrum of
Pollicott--Ruelle resonances, that is, discrete eigenvalues in suitable
anisotropic Sobolev spaces. At present, however, almost nothing is
known in general about the precise existence or location of these
resonances.

The situation is different for Anosov geodesic flows (or more generally
Anosov contact flows), for which the contact form $\mathcal{A}$ is
preserved (see~\eqref{eq:A_preserved}).

For $k\in\mathbb{N}$, let
\begin{equation}
\mathcal{F}_{k}=\left|\mathrm{det}E_{s}\right|^{-1/2}\otimes\mathrm{Pol}_{k}\left(E_{s}\right)\rightarrow M\label{eq:def_F_k}
\end{equation}
the finite-rank vector bundle over $M$ where $\mathrm{Pol}_{k}\left(E_{s}\right)$
denotes the space of homogeneous polynomials of degree $k$ on $E_{s}$.
We then define (as a generalization of~\eqref{eq:def_gamma_0+})
\begin{equation}
\gamma_{k}^{\pm}:=\lim_{t\rightarrow\pm\infty}\log\left\Vert e^{tX_{\mathcal{F}_{k}}}\right\Vert _{L^{\infty}\left(M;\mathcal{F}_{k}\right)}^{1/t}<0.\label{eq:deg_gamma_+-}
\end{equation}
Note that $\gamma_{k}^{+}\to-\infty$ as $k\to\infty$.

For a hyperbolic surface, one has explicitly 
\[
\gamma_{0}^{\pm}=-\tfrac{1}{2},\quad\gamma_{1}^{\pm}=-\tfrac{3}{2},\quad\gamma_{k}^{\pm}=-\tfrac{1}{2}-k.
\]
Microlocal analysis (as explained below) then leads to the following
theorem, illustrated in Figure~\ref{fig:Band-structure}. 

\begin{theorem}

\label{Thm:=000020bands}\cite[thm 1.7 p.654]{faure-tsujii_anosov_flows_16}For
any $C>0$, there exists an anisotropic Sobolev space $\mathcal{H}_{W}(M)$
such that the generator $X$ has discrete Pollicott--Ruelle spectrum
$\mathrm{Spec}(X)$ on $\{\Re z>-C\}$. Moreover, for any $\epsilon>0$
there exist constants $C_{\epsilon}>0$ and $\omega_{\epsilon}>0$
such that 
\begin{equation}
\mathrm{Spec}(X)\cap\{\Re z>-C\}\;\subset\;\Big\{|\Im z|\leq\omega_{\epsilon}\Big\}\;\cup\;\bigcup_{k\in\mathbb{N}}\Big\{\Re z\in[\gamma_{k}^{-}-\epsilon,\;\gamma_{k}^{+}+\epsilon]\Big\}.\label{eq:gaps-1}
\end{equation}
In other words, the spectrum is contained in a union of a low-frequency
horizontal band around the real axis, and vertical bands located near
the strips $[\gamma_{k}^{-},\gamma_{k}^{+}]$. Furthermore, the resolvent
is uniformly bounded in the gaps between bands:
\begin{equation}
\big\|(z-X)^{-1}\big\|_{\mathcal{H}_{W}(M)}\;\leq\;C_{\epsilon},\qquad\forall z\in\mathbb{C}\;\text{ such that }\;|\Im z|>\omega_{\epsilon},\;\Re z\in[\gamma_{k+1}^{+}+\epsilon,\;\gamma_{k}^{-}-\epsilon].\label{eq:resolvent-1}
\end{equation}

\end{theorem}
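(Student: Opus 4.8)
The plan is to reduce the whole statement to one family of resolvent estimates. Since the a priori discreteness of $\mathrm{Spec}(X)$ on $\{\Re z>-C\}$ in a suitable anisotropic space is already available (Butterley--Liverani, Faure--Sjöstrand, via microlocal analysis), it suffices to produce, for every $k\in\mathbb{N}$ and every $\epsilon>0$, constants $\omega_\epsilon,C_\epsilon>0$ with $\|(z-X)^{-1}\|_{\mathcal{H}_W(M)}\le C_\epsilon$ whenever $|\Im z|>\omega_\epsilon$ and $\Re z\in[\gamma_{k+1}^{+}+\epsilon,\;\gamma_k^{-}-\epsilon]$; then there is no spectrum in the gaps, which together with the band‑by‑band localization below yields (\ref{eq:gaps-1}) and (\ref{eq:resolvent-1}). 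Treating $h:=1/|\Im z|$ as a semiclassical parameter, this estimate is a high‑frequency statement, and its unavoidable failure for bounded $|\Im z|$ is exactly the horizontal band $\{|\Im z|\le\omega_\epsilon\}$ (which must absorb, e.g., the mixing eigenvalue $z_0=0$) about which the theorem claims nothing.

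For the resolvent estimate I would microlocally partition $T^{*}M$, semiclassically at frequency $\sim h^{-1}$, and bound $(z-X)^{-1}$ on each piece: (i) off the characteristic variety $\Sigma=\{\xi(X)=0\}$, $z-X$ is elliptic and one gains; (ii) on $\Sigma$ away from the zero section and from the conormal bundles $E_u^{*},E_s^{*}$, the lifted flow is non‑trapping and propagation of singularities gives a gain proportional to the available propagation time, uniformly in $z$; (iii) near $E_s^{*}$ and (iv) near $E_u^{*}$, the classical radial‑point estimates apply once the order of the anisotropic weight $W$ is chosen above threshold at the source $E_u^{*}$ and below threshold at the sink $E_s^{*}$; (v) near the zero section the estimate degrades, which is the true origin of $\omega_\epsilon$. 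The coarse analysis of (i)--(iv) already yields discreteness, but extracting the band structure forces a much finer treatment of the neighborhood of $E_u^{*}$, and the weight $W$ must be built not merely as an escape function but graded so as to separate the transverse modes described next.

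The heart of the argument is therefore near $E_u^{*}$, where $WF(e^{tX}u)$ accumulates as $t\to+\infty$. Using flow‑box coordinates $(x_0,x_u,x_s)$ along orbits — legitimate because $\phi^{t}$ is smooth even though $E_s,E_u$ are only Hölder — I would apply a wave‑packet (FBI‑type) transform adapted to the stable foliation and expand a function microlocalized near $E_u^{*}$ into Hermite modes in the stable variable $x_s$ at the natural scale $\sim|\xi_u|^{-1/2}$. The degree‑$k$ component lives in the finite‑rank bundle $\mathcal{F}_k=|\det E_s|^{-1/2}\otimes\mathrm{Pol}_k(E_s)$ of (\ref{eq:def_F_k}), the half‑density factor appearing through the transverse contraction of $\phi^{t}$ in the change of variables. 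In this grading $e^{tX}$ becomes block‑triangular: the off‑diagonal blocks, generated by the nonlinearity of $\phi^{t}$, strictly decrease $k$ and are of lower order, while the diagonal block on degree $k$ is, up to such errors, the transfer operator $e^{tX_{\mathcal{F}_k}}$ on $C^{\infty}(M;\mathcal{F}_k)$. This is the symplectic‑spinor/metaplectic mechanism of the introduction: the linearized transverse dynamics $\Phi^{t}$ lifts to $\mathrm{Op}(\Phi^{t})\otimes\bigl(\mathcal{C}\,\mathrm{Op}(\Phi^{t})\,\mathcal{C}\bigr)$, and hyperbolicity makes the second, Bergman‑type factor converge to the rank‑one projector onto the Gaussian ground state, so $k=0$ is the first band and $k\ge1$ are the Hermite‑excited bands.

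It then remains to locate the spectrum band by band and patch. By the definition (\ref{eq:deg_gamma_+-}) of $\gamma_k^{\pm}$, $\|e^{tX_{\mathcal{F}_k}}\|_{L^{\infty}(M;\mathcal{F}_k)}$ grows like $e^{\gamma_k^{+}t}$ as $t\to+\infty$ and like $e^{\gamma_k^{-}t}$ as $t\to-\infty$, so $\int_0^{\infty}e^{-tz}e^{tX_{\mathcal{F}_k}}\,dt$ is holomorphic for $\Re z>\gamma_k^{+}$ and $-\int_{-\infty}^{0}e^{-tz}e^{tX_{\mathcal{F}_k}}\,dt$ for $\Re z<\gamma_k^{-}$, both equal to $(z-X_{\mathcal{F}_k})^{-1}$; hence the Pollicott--Ruelle resonances of the $k$‑th diagonal block lie in the strip $[\gamma_k^{-}-\epsilon,\;\gamma_k^{+}+\epsilon]$, with a uniform resolvent bound off it. Since by monotonicity of $\gamma_k^{\pm}$ the prescribed gap $[\gamma_{k+1}^{+}+\epsilon,\;\gamma_k^{-}-\epsilon]$ lies strictly to the left of bands $0,\dots,k$ and strictly to the right of bands $\ge k+1$, and since the degree‑lowering triangular corrections, together with the errors from gluing the five microlocal regions, are of lower order and can be absorbed by a Neumann‑series argument valid precisely once $|\Im z|>\omega_\epsilon$ is large enough, one obtains (\ref{eq:resolvent-1}) and with it the inclusion (\ref{eq:gaps-1}). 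The main obstacle is the construction of the third paragraph: realizing the transverse Hermite grading with an honestly triangular coupling and error terms controlled in a norm compatible with the graded anisotropic weight, while $E_s,E_u$ are merely Hölder continuous — this is where microlocal analysis, the metaplectic/symplectic‑spinor factorization, and the choice of $\mathcal{H}_W(M)$ must all be combined.
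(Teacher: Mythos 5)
Your proposal follows essentially the same route as the paper's source (Faure--Tsujii): discreteness from an escape function and a wave-packet/FBI transform, a second-microlocal analysis at scale $|\Im z|^{-1/2}$ where the metaplectic/symplectic-spinor factorization applies, the polynomial grading by $\mathcal{F}_k$ identifying the $k$-th diagonal block with $e^{tX_{\mathcal{F}_k}}$ and hence placing band $k$ in $[\gamma_k^{-}-\epsilon,\gamma_k^{+}+\epsilon]$, and absorption of the coupling and gluing errors for $|\Im z|$ large. Two slips to fix: in the paper's conventions the fine analysis is localized near the \emph{symplectic trapped set} $\Sigma=\mathbb{R}\mathcal{A}=(E_u\oplus E_s)^{\perp}$, whose normal hyperbolicity is what produces the transverse Hermite grading (not a conic neighborhood of $E_u^{*}$ alone), and since $WF(e^{tX}u)$ accumulates on $E_u^{*}$ that set is the radial \emph{sink} (weight below threshold) while $E_s^{*}$ is the \emph{source} (above threshold) --- you have the two roles swapped.
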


\begin{figure}
\centering{}\scalebox{0.6}[0.6]{\input{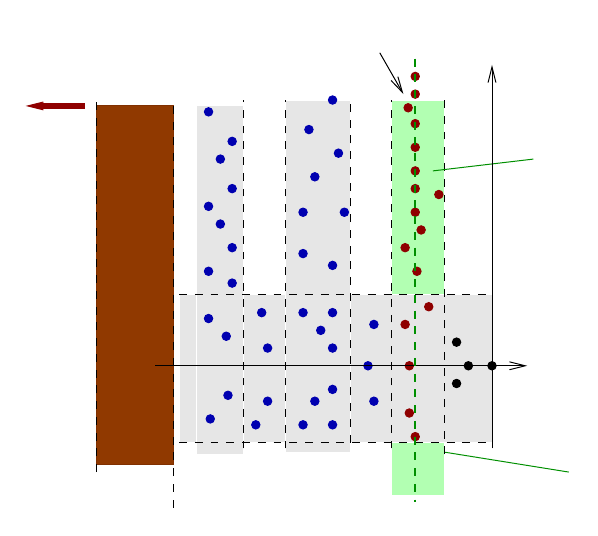tex_t}}$\qquad\qquad${\small\input{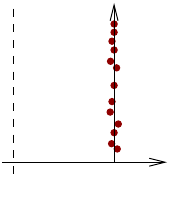tex_t}}\caption{Band structure in discrete Pollicott--Ruelle spectrum}\label{fig:Band-structure}
\end{figure}

In fact, as explained in Section~\ref{sec:Anosov-geodesic-flow},
the mechanisms underlying Theorem~\ref{Thm:=000020bands} are by
now well understood: for long times an effective quantum dynamics
emerges, corresponding precisely to the restriction of $X$ to the
first band $B_{0}$.

\begin{theorem}

\label{thm:.-Suppose-}\cite[thm 1.8 p.656 and eq.(1.17)]{faure-tsujii_anosov_flows_16}.
Suppose that for some $k\in\mathbb{N}$ one has a spectral gap $\gamma_{k+1}^{+}<\gamma_{k}^{-}.$
Then for any $\epsilon>0$ there exist a constant $C_{\epsilon}>0$
and a $\mathcal{H}_{W}$-bounded spectral projector $\Pi_{[0,k]}$
of $X$ such that
\begin{equation}
\left\Vert e^{tX}-e^{tX}\Pi_{\left[0,k\right]}\right\Vert _{\mathcal{H}_{W}\left(M\right)}\leq C_{\epsilon}e^{t\left(\gamma_{k+1}^{+}+\epsilon\right)}.\label{eq:Gap}
\end{equation}

\end{theorem}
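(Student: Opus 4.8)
The plan is to deduce the statement from the band structure of Theorem~\ref{Thm:=000020bands} by isolating the part of $\mathrm{Spec}(X)$ lying to the right of the gap. Fix $\epsilon>0$ small (in particular $\gamma_{k+1}^{+}+2\epsilon<\gamma_{k}^{-}-2\epsilon$, which the gap hypothesis allows; adjusting $\epsilon$ generically, we may also assume the vertical line $\{\Re z=\gamma_{k+1}^{+}+\epsilon\}$ meets no point of $\mathrm{Spec}(X)$). By \eqref{eq:gaps-1}, the set $\Sigma_{+}:=\mathrm{Spec}(X)\cap\{\Re z>\gamma_{k+1}^{+}+\epsilon\}$ is then the union of the first bands $\mathbf{B}_{0},\dots,\mathbf{B}_{k}$ together with the finitely many low-frequency resonances lying to their right (in particular $z_{0}=0$); it is a closed spectral subset, separated from the rest of $\mathrm{Spec}(X)$ (which lies in $\{\Re z\le\gamma_{k+1}^{+}+\epsilon\}$) by a strip on which the resolvent is uniformly bounded --- via \eqref{eq:resolvent-1} away from the real axis, and by continuity on the complementary compact segment. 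Hence there is an $\mathcal{H}_{W}$-bounded projector $\Pi_{[0,k]}$ commuting with $X$ whose range is the sum of the generalized eigenspaces attached to $\Sigma_{+}$ --- concretely, the sum of the band projectors $\Pi_{\mathbf{B}_{0}},\dots,\Pi_{\mathbf{B}_{k}}$ produced by the microlocal analysis and the finite-rank Riesz projectors of the low-frequency resonances --- and on the complementary invariant subspace $V_{-}:=\ker\Pi_{[0,k]}$ one has $\mathrm{Spec}(X|_{V_{-}})\subset\{\Re z\le\gamma_{k+1}^{+}+\epsilon\}$ (so $0\notin\mathrm{Spec}(X|_{V_{-}})$, since $\gamma_{k+1}^{+}<0$). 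As $e^{tX}-e^{tX}\Pi_{[0,k]}=e^{tX}(1-\Pi_{[0,k]})$ factors through $V_{-}$, estimate \eqref{eq:Gap} is equivalent to the semigroup bound $\bigl\|e^{tX}|_{V_{-}}\bigr\|_{\mathcal{H}_{W}}\le C_{\epsilon}\,e^{(\gamma_{k+1}^{+}+\epsilon)t}$.

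To prove this I would use the resolvent representation $e^{tX}|_{V_{-}}=\tfrac{1}{2\pi i}\int_{\Re z=\gamma_{k+1}^{+}+\epsilon}e^{tz}\,(z-X|_{V_{-}})^{-1}\,dz$ together with the bound $\|(z-X|_{V_{-}})^{-1}\|_{\mathcal{H}_{W}}\le C_{\epsilon}$ on that line, which follows from \eqref{eq:resolvent-1} for $|\Im z|>\omega_{\epsilon}$ and from continuity (no poles) for $|\Im z|\le\omega_{\epsilon}$. Since the integrand is merely bounded, not integrable, along a vertical line, I would first restrict to $X$-smooth vectors $u$ and insert the identity $(z-X)^{-1}u=\tfrac{1}{z}u+\tfrac{1}{z^{2}}Xu+\tfrac{1}{z^{2}}X^{2}(z-X)^{-1}u$: as the contour lies in $\{\Re z<0\}$, the first two terms integrate to $0$ for $t>0$, while the last term is absolutely integrable in $z$, so $\bigl\|e^{tX}|_{V_{-}}u\bigr\|_{\mathcal{H}_{W}}\le C_{\epsilon}\,e^{(\gamma_{k+1}^{+}+\epsilon)t}\,\|X^{2}u\|_{\mathcal{H}_{W}}$. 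To remove this apparent loss of two derivatives and reach the stated bound in $\|u\|_{\mathcal{H}_{W}}$, I would invoke the Gearhart--Prüss theorem, available since $\mathcal{H}_{W}$ is a Hilbert space: it suffices to upgrade the line bound to the half-plane bound $\sup_{\Re z>\gamma_{k+1}^{+}+\epsilon}\|(z-X|_{V_{-}})^{-1}\|_{\mathcal{H}_{W}}<\infty$, and then $\bigl\|e^{tX}|_{V_{-}}\bigr\|_{\mathcal{H}_{W}}\le C_{\epsilon}\,e^{(\gamma_{k+1}^{+}+\epsilon)t}$ follows. Alternatively --- and this is essentially the route of \cite{faure-tsujii_anosov_flows_16} --- one reads the bound off the microlocal construction of $\mathcal{H}_{W}$ directly: the escape function defining the anisotropic norm can be arranged so that, modulo a compact remainder supported on the first bands, $X-(\gamma_{k+1}^{+}+\epsilon)$ is dissipative on $V_{-}$, and a Grönwall estimate for $\tfrac{d}{dt}\|e^{tX}u\|_{\mathcal{H}_{W}}^{2}$ then gives \eqref{eq:Gap}, the compact remainder being absorbed using the already-established discreteness of the spectrum.

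The hard part is the one non-soft step above: the uniform resolvent bound for $(z-X|_{V_{-}})^{-1}$ over the \emph{whole} half-plane $\{\Re z>\gamma_{k+1}^{+}+\epsilon\}$ --- equivalently, control of $(z-X)^{-1}$ inside the bands $\mathbf{B}_{0},\dots,\mathbf{B}_{k}$ once their eigenvalues have been projected out. A purely functional-analytic separation-of-spectrum argument does not suffice, because the Pollicott--Ruelle eigenvalues accumulate vertically inside each band with a Weyl-type density; one really needs the semiclassical description of the bands recalled in the introduction --- the factorization $L^{2}(F)=\mathrm{Spin}_{+}(F)\otimes\mathrm{Spin}_{-}(F)$ of the linearized dynamics in the fibers and the convergence of the $\mathrm{Spin}_{-}$-factor to a rank-one projector --- in order to see that, once the first $k+1$ bands are removed, the evolution on $V_{-}$ is genuinely governed by the next band, with rate $\gamma_{k+1}^{+}$. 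Granting that description, which is the substantive content of Theorem~\ref{Thm:=000020bands}, the passage to \eqref{eq:Gap} is routine.
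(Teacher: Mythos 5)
Your reduction is sound in outline and, importantly, your ``alternative route'' at the end of the second paragraph is in fact the route of the source: this survey does not prove the theorem but cites \cite{faure-tsujii_anosov_flows_16} and, in Section~\ref{sec:Anosov-geodesic-flow}, sketches exactly the mechanism you invoke --- the projector $\Pi_{[0,k]}$ is \emph{not} obtained by soft spectral separation but is built microlocally as a sum of (approximately) invariant Toeplitz-type projectors $\mathrm{Op}_{\Sigma}(T_{0}),\dots,\mathrm{Op}_{\Sigma}(T_{k})$ attached to the bundles $\mathcal{F}_{0},\dots,\mathcal{F}_{k}$ over the trapped set, and the decay of $e^{tX}(1-\Pi_{[0,k]})$ at rate $\gamma_{k+1}^{+}+\epsilon$ is read off from the factorization \eqref{eq:factorization} together with the rates \eqref{eq:deg_gamma_+-} for the normal factor. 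You also correctly identify the genuine obstruction to a purely functional-analytic derivation from Theorem~\ref{Thm:=000020bands}: the bounds \eqref{eq:gaps-1} and \eqref{eq:resolvent-1} control the resolvent only in the gaps, not inside the bands $\mathbf{B}_{0},\dots,\mathbf{B}_{k}$, and the half-plane bound needed for Gearhart--Pr\"uss is precisely what must come from the microlocal construction.

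One step in your first paragraph should be retracted rather than merely hedged: the claim that $\Sigma_{+}$, being ``a closed spectral subset separated from the rest by a strip of bounded resolvent,'' automatically yields an $\mathcal{H}_{W}$-bounded projector commuting with $X$. For an unbounded spectral part (the bands extend to $\pm i\infty$) there is no bounded Riesz contour, and separation of the spectrum of an unbounded generator by a vertical strip does \emph{not} in general produce a bounded spectral projection --- this failure is well known for semigroups whose growth bound exceeds their spectral bound. So the existence and boundedness of $\Pi_{[0,k]}$ is itself part of the hard content, on the same footing as the half-plane resolvent estimate, and both are supplied only by the semiclassical calculus on $C(\Sigma;\mathcal{F}_{k})$. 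Your closing paragraph effectively concedes this, so the proposal is best read as a correct reduction to \cite[thm 1.8]{faure-tsujii_anosov_flows_16} rather than a proof; that is also all the present paper offers.
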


\begin{remark}\label{rem:band-gap} Under the gap assumption $\gamma_{k+1}^{+}<\gamma_{k}^{-}$,
Theorem~\ref{thm:.-Suppose-} provides a sharper description of correlation
functions (recall that $\gamma_{k}^{+}\to-\infty$ as $k\to\infty$).
In the case $k=0$, the contribution $e^{tX}\Pi_{[0,0]}$ comes entirely
from the first band $B_{0}$ and accounts for the wave dynamics (or
“emerging quantum dynamics”) observed in Figure~\ref{fig:Geodesic-trajectories-on}(d).

The projector $\Pi_{[0,k]}$ is unique up to a finite-rank correction,
corresponding to the finitely many Pollicott--Ruelle eigenvalues
lying in the interval $\Re(z)\in[\gamma_{k+1}^{+}+\epsilon,\;\gamma_{k}^{-}-\epsilon]$,
as illustrated in Figure~\ref{fig:Band-structure}.\end{remark}

\subsection{A special \textquotedblleft semiclassical\textquotedblright{} dynamical
bundle.}\label{subsec:A-special-semiclassical}

The results \eqref{eq:resolvent-1} and \eqref{eq:Gap} above rely
on the gap assumption $\gamma_{k+1}^{+}<\gamma_{k}^{-}$. There is
a relatively simple way to ensure such a gap, namely by considering
the more general problem of the group of pullback operators $\big(e^{tX_{F}}\big)_{t\in\mathbb{R}}$
acting on sections of a fiber bundle $F\to M$. This analysis was
carried out in~\cite{faure-tsujii_anosov_flows_16}, where the definition
\eqref{eq:def_F_k} is replaced by the twisted vector bundle $\mathcal{F}_{k}\;=\;\big|\det E_{s}\big|^{-1/2}\otimes\mathrm{Pol}_{k}\left(E_{s}\right)\otimes F$.
A natural choice is to take $F=|\det E_{s}|^{+1/2}$ (the bundle of
half-densities on $E_{s}$). For this choice, the first band corresponds
to the trivial bundle $\mathcal{F}_{0}=\mathbb{C}\to M$, and one
obtains 
\[
\gamma_{1}^{+}<\gamma_{0}^{\pm}=0,
\]
which gives a first band accumulating on the imaginary axis, with
a spectral gap on its left-hand side (see Figure~\ref{fig:Band-structure}(b)).

A technical difficulty arises because the bundle $E_{s}\to M$ is
not smooth. To circumvent this, in~\cite{faure-tsujii_anosov_flows_13}
we work instead with a larger but smooth Grassmannian bundle. 

\section{Pollicott-Ruelle spectrum of $X$ for a hyperbolic surface.}\label{sec:Pollicott-Ruelle-spectrum}

In this section we explain how to obtain the band structure of the
Pollicott--Ruelle spectrum (Theorem~\ref{Thm:=000020bands}) in
the special case of the geodesic flow on a compact hyperbolic surface,
using the algebraic structure of $\mathrm{SL}_{2}\mathbb{R}$. These
arguments are classical in Selberg's theory and appear prominently
in the work of Flaminio and Forni~\cite{flaminio_forni_2003}. We
follow here the approach of~\cite{dyatlov_faure_guillarmou_2014};
see also \cite{guillarmou_weich_resonances_16,hilgert2021higher,anantharaman2012intertwining,borthwick2016symmetry,Borthwick_14,Borthwick_Weich_14}.

A compact hyperbolic surface is a closed Riemannian surface $(\mathcal{N},g)$
with constant negative curvature $\kappa=-1$. An example is the Bolza
surface, illustrated in Figure~\ref{fig:Geodesic-trajectories-on}.
Such a surface can be realized as a double coset quotient 
\[
\mathcal{N}\;=\;\Gamma\backslash\mathrm{SL}_{2}\mathbb{R}/\mathrm{SO}_{2},
\]
where $\Gamma\subset\mathrm{SL}_{2}\mathbb{R}$ is a discrete cocompact
subgroup. See Figure~\ref{fig:hyp_surface}. 

\begin{figure}
\centering{}\scalebox{0.8}[0.8]{\input{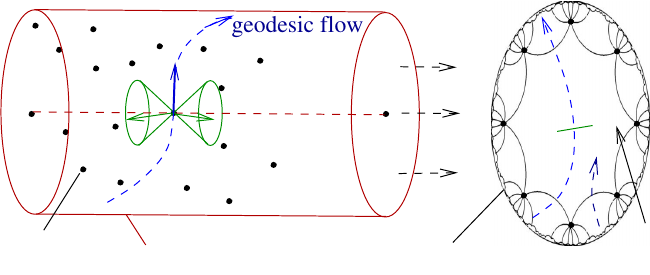tex_t}}\caption{}\label{fig:hyp_surface}
\end{figure}

On $\mathrm{SL}_{2}\mathbb{R}$ we consider the three left-invariant
tangent vector fields\footnote{By definition $X$ gives a flow map: $g\mapsto ge^{tX}$ and idem
for $U,S$}
\begin{equation}
X=\left(\begin{array}{cc}
1/2 & 0\\
0 & -1/2
\end{array}\right),\quad U=\left(\begin{array}{cc}
0 & 0\\
1 & 0
\end{array}\right),\quad S=\left(\begin{array}{cc}
0 & 1\\
0 & 0
\end{array}\right)\label{eq:X,U,S}
\end{equation}
that satisfies the Lie algebra relations
\begin{equation}
\left[X,U\right]=-U,\left[X,S\right]=S,\left[S,U\right]=2X.\label{eq:Lie_algebra}
\end{equation}
A PR eigenvector $u\in\mathcal{H}_{W}\left(M\right)$ satisfies $Xu=zu$
with $\mathrm{Re}\left(z\right)\leq0$. Then
\[
X\left(Uu\right)\eq{\ref{eq:Lie_algebra}}\left(UX-U\right)u=\left(z-1\right)\left(Uu\right),\quad\left(X\right)\left(Su\right)\eq{\ref{eq:Lie_algebra}}\left(SX+S\right)u=\left(z+1\right)\left(Su\right).
\]
Hence $\exists k$ s.t. $S^{k}u=0,S^{k-1}u\neq0$. We may suppose
that $Su=0$. We use the Casimir operator $\Delta$ that commutes
with the Lie algebra:
\[
\Delta u=\left(-X^{2}-\frac{1}{2}SU-\frac{1}{2}US\right)u\eq{\ref{eq:Lie_algebra}}\left(-X^{2}-X-US\right)u=-z\left(z+1\right)u=\mu u
\]
hence $z=-\frac{1}{2}\pm i\sqrt{\mu-\frac{1}{4}}$. Consider the averaged
of $u$ under the action of $\mathrm{SO_{2}}$: we have $\left\langle u\right\rangle _{SO_{2}}\in C^{\infty}\left(\mathcal{N}\right)$
an eigenfunction of the Casimir acting on $C^{\infty}\left(\mathcal{N}\right)$
that is the hyperbolic Laplacian $\Delta\equiv-y^{2}\left(\frac{\partial^{2}}{\partial x^{2}}+\frac{\partial^{2}}{\partial y^{2}}\right)$
and has discrete spectrum $\mathrm{Spec}_{L^{2}\left(\mathcal{N}\right)}\left(\Delta\right)=\left\{ 0,\mu_{1},\mu_{2},\ldots\right\} \subset\mathbb{R}^{+}$.
We deduce that 
\begin{equation}
z=z_{0,l}=-\frac{1}{2}\pm i\sqrt{\mu_{l}-\frac{1}{4}}\label{eq:PR_hyperbolic}
\end{equation}
 and the band structure of PR spectrum, shown on figure \ref{fig:PR-discrete-spectrum}.

\begin{figure}
\begin{centering}
\scalebox{0.6}[0.6]{\input{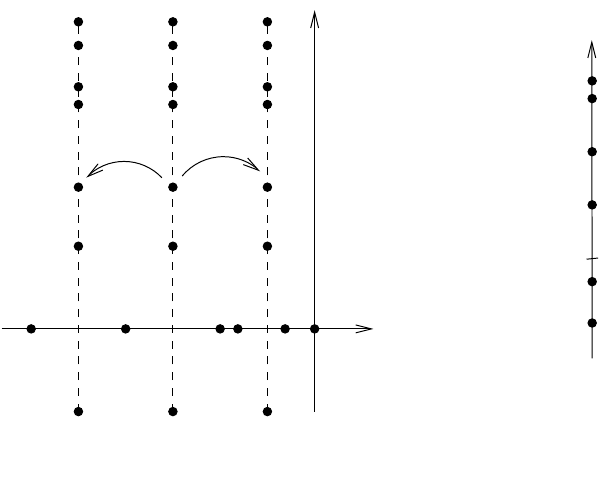tex_t}}$\qquad\qquad\qquad$\scalebox{0.6}[0.6]{\input{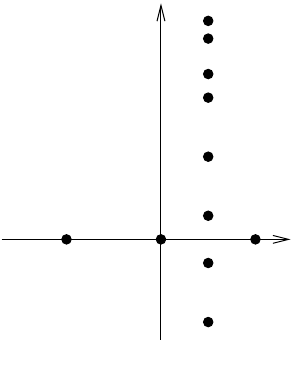tex_t}}
\par\end{centering}
\caption{PR discrete spectrum of a geodesic vector field on a hyperbolic surface
$\mathcal{N}$, from the spectrum of Laplacian $\Delta$.}\label{fig:PR-discrete-spectrum}
\end{figure}

\begin{remark}For a hyperbolic surface, the bundle $E_{s}\to M$
is smooth and homogeneous. Considering the semiclassical bundle $F=\left|\det E_{s}\right|^{+1/2}$
introduced in Section~\ref{subsec:A-special-semiclassical} is equivalent
to working with the differential operator $X+\tfrac{1}{2}$. This
shifts the spectrum (see Eq.~\eqref{eq:PR_hyperbolic} and Fig.~\ref{fig:PR-discrete-spectrum})
by $+\tfrac{1}{2}$, resulting in the first band $\mathrm{B}_{0}$
lying exactly on the imaginary axis. \end{remark}

\section{Relation with periodic orbits: trace formula and dynamical zeta
functions.}\label{sec:Relation-with-periodic}

In this section we recall an exact and fundamental relation linking
the Pollicott--Ruelle spectrum to the periodic orbits of an Anosov
geodesic flow. This relation arises naturally from the fact that we
are working with the pullback operator \eqref{eq:def_pullback}, and
it plays a central role in orbit-counting problems and related applications. 

\subsection{The Atiyah-Bott-Guillemin formula.}

For $u\in C^{\infty}(M)$, $x\in M$, and $t\in\mathbb{R}$, the pullback
operator can be written as 
\[
(e^{tX}u)(x)\;\underset{\eqref{eq:def_pullback}}{=}\;u\big(\phi^{t}(x)\big)=\int_{M}K_{t}(x,x')\,u(x')\,dx',
\]
where the distributional Schwartz kernel is given by $K_{t}(x,x')\;=\;\delta\big(x'-\phi^{t}(x)\big),$
that is, the kernel is supported on the graph of the flow. Taking
the trace therefore picks out the periodic points of the flow.

A point $x\in M$ such that $\phi^{T}(x)=x$ for some $T>0$ is called
a \emph{periodic point} with period $T$. If $T>0$ is minimal, i.e.
$\phi^{t}(x)\neq x$ for all $0<t<T$, then $T$ is the \emph{primitive
period}, and the corresponding orbit $\gamma:=\{\phi^{t}(x)\;;\;t\in[0,T)\}$
is called a \emph{primitive orbit}, with length $|\gamma|:=T$. More
generally, $\phi^{n|\gamma|}(x)$ with $n\geq1$ also yields periodic
points belonging to the same orbit. 

\begin{theorem}

\label{thm:ABG-trace}\textbf{\cite{atiyah_67}\cite{guillemin_1977_lectures}}
\textbf{``Atiyah-Bott-Guillemin trace formula''}.The distributional
trace 
\[
\textrm{Tr}^{\flat}(e^{tX}):=\int_{M}K_{t}(x,x)\,dx
\]
is well defined as a distribution in $\mathcal{D}'(\mathbb{R}_{t})$,
and one has 
\begin{equation}
\mathrm{Tr}^{\flat}\left(e^{tX}\right)=\sum_{\gamma:\mathrm{primitive\,orbit}}\left|\gamma\right|\sum_{n\geq1}\frac{\delta\left(t-n\left|\gamma\right|\right)}{\left|\mathrm{det}\left(\left(\mathrm{Id}-\left(d\phi^{t}\right)\left(\gamma\right)\right)_{/E_{u}\oplus E_{s}}\right)\right|}.\label{eq:Atiyah-Boot}
\end{equation}

\end{theorem}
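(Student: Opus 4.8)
The plan is to realize the distributional trace as the pushforward of a pullback to the diagonal, and then to reduce the computation to a normal form near each closed orbit. Interpret the Schwartz kernel $K(t,x,x')=\delta\big(x'-\phi^{t}(x)\big)$ as the Dirac distribution carried by the graph $G:=\{(t,x,\phi^{t}(x))\ :\ t\in\mathbb{R},\,x\in M\}\subset\mathbb{R}\times M\times M$, so that $\mathrm{WF}(K)=N^{*}G\setminus 0$, and \emph{define} $\mathrm{Tr}^{\flat}(e^{tX})$ as the pushforward to $\mathbb{R}_{t}$ — legitimate since $M$ is compact — of the pullback $\iota^{*}K$ of $K$ along the diagonal embedding $\iota:(t,x)\mapsto(t,x,x)$. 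Differentiating the obvious parametrization of $G$ gives
\[
\mathrm{WF}(K)=\Big\{\big(t,\,-\xi'\big(X(\phi^{t}x)\big);\;x,\,-(d\phi^{t})^{\mathsf T}\xi';\;\phi^{t}x,\,\xi'\big)\ :\ \xi'\in T^{*}_{\phi^{t}x}M\setminus 0\Big\},
\]
where $(d\phi^{t})^{\mathsf T}$ is the transpose acting on covectors. By Hörmander's pullback theorem, $\iota^{*}K$ exists as soon as $\mathrm{WF}(K)\cap N^{*}\iota=\emptyset$; a point of this intersection would consist of a periodic point $\phi^{t}(x)=x$ with $t\neq 0$, together with a \emph{nonzero} covector $\xi'$ satisfying $(d\phi^{t})^{\mathsf T}\xi'=\xi'$ and $\xi'\big(X(x)\big)=0$. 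At such a periodic point $d\phi^{t}$ preserves the splitting \eqref{eq:Anosov_Es_Eu}, acting as the identity on $\mathbb{R}X$ and hyperbolically — hence with no eigenvalue equal to $1$ — on $E_{u}\oplus E_{s}$ by Theorem~\ref{thm:-if-}; therefore $(d\phi^{t})^{\mathsf T}\xi'=\xi'$ forces $\xi'$ to annihilate $E_{u}\oplus E_{s}$, and combined with $\xi'\big(X(x)\big)=0$, with $X(x)\neq 0$ (as $X$ is the Reeb field, $\mathcal{A}(X)=1$ by \eqref{eq:Reeb}), and with $TM=\mathbb{R}X\oplus E_{u}\oplus E_{s}$, this gives $\xi'=0$ — a contradiction. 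Hence $\iota^{*}K\in\mathcal{D}'(\mathbb{R}\times M)$ and $\mathrm{Tr}^{\flat}(e^{tX})\in\mathcal{D}'(\mathbb{R}_{t})$; it vanishes identically near $t=0$, since the length spectrum is bounded away from $0$.

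For the explicit value, note that $\mathrm{supp}\,\iota^{*}K\subset\{(t,x)\ :\ \phi^{t}(x)=x\}$, which over any bounded $t$-interval is a finite union of circles $\gamma$ carried at times $t=n|\gamma|$ (a standard consequence of uniform hyperbolicity: closed orbits are isolated and finitely many of bounded length). It therefore suffices to compute $\mathrm{Tr}^{\flat}(e^{tX})$ near a fixed period $T=n|\gamma|$. Using that $X$ is nowhere vanishing, I would put the flow in flow-box form near $\gamma$: identify a tubular neighbourhood of $\gamma$ with the mapping torus of the Poincaré return map $\mathcal{P}$ on a small transversal section $\Sigma$ through a point $x_{0}\in\gamma$, with $\mathcal{P}(0)=0$ and $d\mathcal{P}(0)=(d\phi^{|\gamma|})\big|_{E_{u}(x_{0})\oplus E_{s}(x_{0})}$. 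In these coordinates $X=\partial_{\sigma}$, the flow translates in $\sigma$ (inserting $\mathcal{P}$ at each period), and $X$-invariance of the Liouville volume yields a factorization $dx=d\sigma\otimes\mu$. Writing $t=T+s$, the fixed-point equation for a point $(\sigma,y)$ near $\gamma$ reduces to $s=0$ and $\mathcal{P}^{n}(y)=y$, whose only solution near the origin is $y=0$ because $I-d\mathcal{P}^{n}(0)=I-(d\phi^{T})(\gamma)\big|_{E_{u}\oplus E_{s}}$ is invertible. The change of variables $y\mapsto y-\mathcal{P}^{n}(y)$ in the transversal integral then produces the factor $\big|\det\!\big(I-(d\phi^{T})(\gamma)\big|_{E_{u}\oplus E_{s}}\big)\big|^{-1}$; integrating the density $1$ in $\sigma$ along $\gamma$ produces the multiplicity $|\gamma|$; and the $s$-direction produces $\delta(t-T)$. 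Summing the contributions of all pairs $(\gamma,n)$ gives \eqref{eq:Atiyah-Boot} — this is the classical clean-intersection computation of Atiyah--Bott and Guillemin \cite{atiyah_67,guillemin_1977_lectures}; see also \cite{dyatlov2019mathematical}.

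I expect the first step to be the main obstacle: it is the verification of the wavefront-set condition that makes $\mathrm{Tr}^{\flat}(e^{tX})$ exist as a distribution at all, and it is precisely there that the two geometric inputs enter essentially — hyperbolicity of the linearized return maps, so that every closed orbit is non-degenerate, i.e. $\det\!\big(I-(d\phi^{T})(\gamma)\big)\neq 0$, and the fact that $X$ is a nowhere-vanishing Reeb field, so that the flow has no equilibria. Once this is in place, the remainder is a normal-form computation whose only delicate points are to carry the invariant density $\mu$ through the change of variables without introducing a spurious metric factor in the Jacobian, and to observe that $\det\!\big(I-(d\phi^{T})(\gamma)\big|_{E_{u}\oplus E_{s}}\big)$ is independent of the base point $x_{0}\in\gamma$, because different choices merely conjugate the linearized return map.
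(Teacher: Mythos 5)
Your proposal follows essentially the same route as the paper's proof: localize the Schwartz kernel $\delta\bigl(x'-\phi^{t}(x)\bigr)$ on the diagonal, reduce to the periodic orbits, extract $\bigl|\det\bigl(\mathrm{Id}-d\phi^{t}\bigr)_{/E_{u}\oplus E_{s}}\bigr|^{-1}$ from the Jacobian of the transversal change of variables, and obtain $|\gamma|$ from the integral along the orbit and $\delta(t-n|\gamma|)$ from the flow direction. The only genuine addition is your wavefront-set verification of Hörmander's pullback condition, which the paper asserts implicitly by computing formally with $\delta$; note merely that this condition in fact fails at $t=0$ (where $d\phi^{0}=\mathrm{Id}$), so one defines $\iota^{*}K$ for $t\neq0$ and extends by zero using that periods are bounded below --- which is precisely the role of your closing remark about the length spectrum.
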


\begin{proof}By definition, 
\[
\mathrm{Tr}^{\flat}(e^{tX}):=\int_{M}K_{t}(x,x)\,dx=\int_{M}\delta\!\big(x-\phi^{t}(x)\big)\,dx,
\]
so the integral localizes on the periodic points of the flow. Near
a periodic point $m\in M$, we choose local coordinates $(x,z)\in\mathbb{R}^{2d}\times\mathbb{R}$
such that the hyperplane $\{z=\mathrm{const}\}$ is tangent to $E_{u}(m)\oplus E_{s}(m)$.
Then the differential reads $d\!\left((x,z)-\phi^{t}(x,z)\right)\simeq\big(\textrm{Id}-d\phi_{|E_{u}\oplus E_{s}}^{t}(x),\,0\big).$
We now use the general fact: if $f:\mathbb{R}^{n}\to\mathbb{R}^{n}$
has a fixed point $f(0)=0$, then 
\[
\int\delta(f(x))\,dx=\frac{1}{|\det Df(0)|}\int\delta(y)\,dy=\frac{1}{|\det Df(0)|}.
\]
Applying this to the flow near $m$, and integrating along the orbit,
produces the factor $|\gamma|$ corresponding to its primitive period.
This gives precisely formula~\eqref{eq:Atiyah-Boot}.\end{proof}

Just as for the trace of a matrix, the flat trace $\mathrm{Tr}^{\flat}(e^{tX})$
does not depend on the choice of a (reasonable) basis. This suggests
interpreting and using the Pollicott--Ruelle spectrum of $X$ as
providing a ``spectral decomposition'' of the operator $e^{tX}$. 

\subsection{Dynamical zeta functions.}

Dynamical zeta functions provide a convenient way to exploit the trace
formula~\eqref{eq:Atiyah-Boot} and to deduce a relation between
the Pollicott--Ruelle spectrum and periodic orbits.

\begin{theorem}

\label{thm:For-an-Anosov}\cite{liverani_giulietti_2012}\cite{dyatlov_zworski_zeta_2013}Let
$X$ be an Anosov vector field. For $\Re(z)>0$, the \emph{dynamical
zeta function} is defined by 
\begin{align}
d\left(z\right) & :=\exp\left(-\sum_{\gamma}\sum_{n\geq1}\frac{e^{-zn\left|\gamma\right|}}{n\left|\mathrm{det}\left(\left(\mathrm{Id}-\left(d\phi^{n\left|\gamma\right|}\right)\left(\gamma\right)\right)_{/E_{u}\oplus E_{s}}\right)\right|}\right)\label{eq:def_dyn_zeta}
\end{align}
where the outer sum runs over primitive periodic orbits $\gamma$
of $X$. This series converges for $\Re(z)>0$, and the function $d:\,z\mapsto d(z)$
extends holomorphically to the whole complex plane $\mathbb{C}$.
Moreover, its zeros coincide with the Pollicott--Ruelle eigenvalues,
counted with multiplicities.

\end{theorem}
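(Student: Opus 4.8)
The plan is to transfer all the analytic content of the statement onto the resolvent $(z-X)^{-1}$, whose meromorphic structure is already available: I would show that the logarithmic derivative $d'/d$ equals the flat trace of the resolvent, and then read off holomorphy of $d$ and the location of its zeros from the Pollicott--Ruelle spectrum of $X$. For the convergence assertion, along a closed orbit $\gamma$ the Anosov splitting \eqref{eq:Anosov_Es_Eu} gives $\bigl|\det\bigl((\mathrm{Id}-(d\phi^{n|\gamma|})(\gamma))_{/E_u\oplus E_s}\bigr)\bigr|^{-1}\le C\,e^{-n\ell^{\mathrm{u}}(\gamma)}$ with $\ell^{\mathrm{u}}(\gamma)=\log|\det (d\phi^{|\gamma|})_{|E_u(\gamma)}|\ge\lambda|\gamma|>0$; summing the geometric series in $n$ and then over primitive orbits, absolute convergence of the series defining $\log d(z)$ on $\{\Re z>0\}$ follows from the standard orbit-counting bound, equivalently the non-positivity of the topological pressure $P(-\varphi^{\mathrm{u}})$ (the Margulis--Ruelle inequality). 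In particular $d$ is holomorphic and zero-free on $\{\Re z>0\}$.

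Differentiating the defining series, one gets for $\Re z>0$
\[
\frac{d'(z)}{d(z)}=\sum_{\gamma}\sum_{n\ge1}\frac{|\gamma|\,e^{-zn|\gamma|}}{\bigl|\det\bigl((\mathrm{Id}-(d\phi^{n|\gamma|})(\gamma))_{/E_u\oplus E_s}\bigr)\bigr|}.
\]
On the other hand, for $\Re z$ large the resolvent is the Laplace transform $(z-X)^{-1}=\int_0^\infty e^{-zt}e^{tX}\,dt$; applying $\mathrm{Tr}^{\flat}$ and inserting the Atiyah--Bott--Guillemin formula \eqref{eq:Atiyah-Boot} reproduces exactly the same series, so $d'/d=\mathrm{Tr}^{\flat}\bigl((z-X)^{-1}\bigr)$ first on $\{\Re z\gg1\}$ and then on all of $\{\Re z>0\}$ by analytic continuation (where neither side has singularities, since the leading Pollicott--Ruelle eigenvalue is $z_0=0$). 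The delicate point here is that $\mathrm{Tr}^{\flat}\bigl((z-X)^{-1}\bigr)$ be well defined and that the flat trace commute with the Laplace integral; this is guaranteed by the microlocal description of the resolvent kernel, whose wavefront set is carried by the flow into $E_u^{*}$ and therefore misses the conormal of the diagonal, exactly as in \cite{dyatlov2019mathematical}.

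To continue $d$ and localize its zeros I would invoke Butterley--Liverani and the anisotropic Sobolev framework recalled above: $z\mapsto(z-X)^{-1}$ continues meromorphically to $\mathbb{C}$ on a suitable $\mathcal{H}_W(M)$, with finite-rank poles exactly at the Pollicott--Ruelle eigenvalues $z_j$, and Laurent expansion $(z-X)^{-1}=\frac{\Pi_j}{z-z_j}+\sum_{\ell\ge2}\frac{N_j^{\ell-1}}{(z-z_j)^{\ell}}+(\text{holomorphic})$ with $\Pi_j$ the spectral projector and $N_j=(X-z_j)\Pi_j$ nilpotent. Taking flat traces, the nilpotent terms are traceless, so $\mathrm{Tr}^{\flat}\bigl((z-X)^{-1}\bigr)$ is holomorphic off $\{z_j\}$ and has a simple pole at each $z_j$ with residue the algebraic multiplicity $m_j=\operatorname{rank}\Pi_j\in\mathbb{N}^{*}$. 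Since these residues are positive integers, the function $z\mapsto d(z_\ast)\exp\bigl(\int_{z_\ast}^{z}\mathrm{Tr}^{\flat}(w-X)^{-1}\,dw\bigr)$ is single-valued on $\mathbb{C}\setminus\{z_j\}$, extends holomorphically across each $z_j$ with a zero of order exactly $m_j$, and coincides with $d$ on $\{\Re z>0\}$ by the previous step; hence $d$ extends to an entire function whose zero set is $\{z_j\}$, counted with multiplicity $m_j$.

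The main obstacle, I expect, is concentrated in the last two steps: the well-posedness of $\mathrm{Tr}^{\flat}$ of the continued resolvent, its interchange with the Laplace transform, and the residue computation all rest on the microlocal/wavefront structure of the resolvent on anisotropic spaces rather than on any elementary bound, and the $t\to0^{+}$ behaviour of the distributional trace must also be handled. An alternative route that sidesteps the flat-trace identity is to realize $d(z)$ directly as a Fredholm determinant $\det(\mathrm{Id}-\mathcal{L}_z)$ of a trace-class-type transfer operator obtained from an FBI or Markov model of the flow, and to deduce holomorphy and the zero set from Gohberg--Krein determinant theory, as in \cite{liverani_giulietti_2012,dyatlov_zworski_zeta_2013}; this is more self-contained but demands considerably more preliminary construction.
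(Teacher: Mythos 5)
Your proposal follows essentially the same route as the paper's own outline: identify $d'/d$ with $\mathrm{Tr}^{\flat}\bigl((z-X)^{-1}\bigr)$ via the Laplace transform of the Atiyah--Bott--Guillemin trace formula, then transfer the meromorphic continuation and the finite-rank pole structure of the resolvent on anisotropic spaces to $d(z)$, reading off the zeros and multiplicities from the residues. Your version in fact supplies more of the "careful analysis" that the paper leaves implicit (the convergence bound for $\Re z>0$, the integrality of the residues, and the single-valuedness argument), and correctly flags the flat-trace/wavefront issues as the genuinely delicate points.
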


\begin{remark}A similar result was obtained earlier for Anosov diffeomorphisms
by Baladi and Tsujii~\cite{Baladi-Tsujii08}.\end{remark}

Let us indicate the origin of the somewhat complicated expression
\eqref{eq:def_dyn_zeta}, and outline the proof. The guiding idea
is to mimic the case of a finite-dimensional matrix $A$ (or a finite-rank
operator), for which the eigenvalues are the zeros of the holomorphic
function $d(z)=\det(z-A)$.

Indeed, one has 
\[
(z-A)^{-1}=\int_{0}^{\infty}e^{-(z-A)t}\,dt,\qquad d(z)=\det(z-A)=\exp\!\big(\textrm{Tr}(\log(z-A))\big).
\]
Differentiating gives 
\[
\frac{d}{dz}\log d(z)=\textrm{Tr}\,(z-A)^{-1}=\int_{0}^{\infty}e^{-zt}\,\textrm{Tr}\!\big(e^{tA}\big)\,dt.
\]

For $z\notin\mathrm{Spec}(A)$, this can be integrated to yield 
\[
d(z)=d(z_{0})\cdot\exp\!\left(-\lim_{\varepsilon\to0}\int_{\varepsilon}^{\infty}\frac{1}{t}e^{-zt}\textrm{Tr}\!\big(e^{tA}\big)\,dt\;\Big|_{z_{0}}^{z}\right).
\]

Replacing $\mathrm{Tr}(e^{tA})$ by its dynamical analogue, namely
the flat trace $\mathrm{Tr}^{\flat}(e^{tX})$ from~\eqref{eq:Atiyah-Boot},
and using that $\textrm{Tr}^{\flat}(e^{tX})=0$ for $0<t<|\gamma|_{\min}$,
we obtain 
\[
\exp\!\left(-\int_{|\gamma|_{\min}}^{\infty}\frac{1}{t}e^{-zt}\,\textrm{Tr}^{\flat}(e^{tX})\,dt\right)\;\underset{\eqref{eq:Atiyah-Boot}}{=}\;\exp\!\left(-\sum_{\gamma}\sum_{n\geq1}\frac{e^{-zn|\gamma|}}{n\,\big|\det(\mathrm{Id}-(d\phi^{n|\gamma|})(\gamma))_{|E_{u}\oplus E_{s}}\big|}\right),
\]
which is exactly the definition \eqref{eq:def_dyn_zeta} of the dynamical
zeta function.

On the other hand, the poles of the resolvent $z\mapsto(z-X)^{-1}$
are precisely the Pollicott--Ruelle eigenvalues of $X$, and a careful
analysis shows that these correspond to the zeros of $d(z)$, thus
giving Theorem~\ref{thm:For-an-Anosov}. 

\subsection{Semiclassical zeta function}

As explained in section \ref{subsec:A-special-semiclassical} it is
interesting to consider the action of the geodesic vector field on
sections of the ``semi-classical'' bundle $F=\left|\mathrm{det}E_{s}\right|^{+1/2}$
(half-densities on $E_{s}$) . Indeed observe that we have $\left|\mathrm{det}\left(\left(\mathrm{Id}-\left(d\phi^{t}\right)\left(\gamma\right)\right)_{/E_{u}\oplus E_{s}}\right)\right|^{-1}\underset{t\infty}{\simeq}\mathrm{det}\left(d\phi_{/E_{u}}^{t}\right)^{-1/2}\left|\mathrm{det}\left(\left(\mathrm{Id}-\left(d\phi^{t}\right)\left(\gamma\right)\right)_{/E_{u}\oplus E_{s}}\right)\right|^{-1/2}$
and $\mathrm{det}\left(d\phi_{/E_{u}}^{t}\right)^{-1/2}=e^{-\frac{1}{2}\int^{t}\mathrm{div}X_{/E_{u}}}$
so formally 
\[
\mathrm{Tr}\left(\left(e^{tX_{F}}\right)_{/F_{\gamma}}\right)\left|\mathrm{det}\left(\left(\mathrm{Id}-\left(d\phi^{t}\right)\left(\gamma\right)\right)_{/E_{u}\oplus E_{s}}\right)\right|^{-1}\underset{t\rightarrow\infty}{\simeq}\left|\mathrm{det}\left(\left(\mathrm{Id}-\left(d\phi^{t}\right)\left(\gamma\right)\right)_{/E_{u}\oplus E_{s}}\right)\right|^{-1/2}.
\]

The ``\textbf{semi-classical zeta function}'' has been defined by
Voros (and others) \cite{voros_zeta_function_1988} by
\begin{align}
d_{sc}\left(z\right):= & \exp\left(-\sum_{\gamma}\sum_{n\geq1}\frac{e^{-zn\left|\gamma\right|}}{n\left|\mathrm{det}\left(\left(\mathrm{Id}-\left(d\phi^{t}\right)\left(\gamma\right)\right)_{/E_{u}\oplus E_{s}}\right)\right|^{1/2}}\right)\label{eq:dGV}
\end{align}

and has the following property, (see figure \ref{fig:Band-structure}
(b)). Recall that $\gamma_{1}^{+}<0$.

\begin{theorem}

\label{thm:The-semi-classical-zeta}\cite{faure-tsujii_anosov_flows_13}The
semi-classical zeta function $d_{sc}\left(z\right)$ well defined
for $\mathrm{Re}\left(z\right)\gg1$, has an meromorphic extension
on $\mathbb{C}$. For any $\forall\epsilon>0$, on $\mathrm{Re}\left(z\right)>\gamma_{1}^{+}+\epsilon$,
$d_{sc}\left(z\right)$ has finite number of poles and its zeroes
coincide (up to finite number) with the Ruelle eigenvalues of $X_{F}$
that accumulate on the imaginary axis.

\end{theorem}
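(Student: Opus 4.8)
The plan is to reduce the semiclassical zeta function $d_{sc}(z)$ to the ordinary dynamical zeta function of a \emph{twisted} transfer operator, so that Theorem~\ref{thm:For-an-Anosov} can be applied essentially verbatim. The key algebraic observation, already sketched in the excerpt, is that inserting the trace of $e^{tX_F}$ on the fiber $F_\gamma = (|\det E_s|^{+1/2})_\gamma$ over a periodic orbit converts the weight $|\det(\mathrm{Id}-(d\phi^{n|\gamma|})(\gamma))_{/E_u\oplus E_s}|^{-1}$ into $|\det(\mathrm{Id}-(d\phi^{n|\gamma|})(\gamma))_{/E_u\oplus E_s}|^{-1/2}$, up to the exponentially subdominant correction coming from $\det(d\phi^t_{/E_u})^{1/2}=e^{\frac12\int^t\mathrm{div}X_{/E_u}}$. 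More precisely, I would first establish the \emph{exact} (not merely asymptotic) identity
\begin{equation}
\mathrm{Tr}\Bigl(\bigl(e^{n|\gamma|X_F}\bigr)_{/F_\gamma}\Bigr)\;\bigl|\det(\mathrm{Id}-(d\phi^{n|\gamma|})(\gamma))_{/E_u\oplus E_s}\bigr|^{-1}\;=\;\bigl|\det(\mathrm{Id}-(d\phi^{n|\gamma|})(\gamma))_{/E_u\oplus E_s}\bigr|^{-1/2},\label{eq:exact-weight}
\end{equation}
which holds on the nose because $|\det(\mathrm{Id}-A)|\cdot|\det A|^{-1}=|\det(A^{-1}-\mathrm{Id})|=|\det(\mathrm{Id}-A^{-1})|$ and, for $A=(d\phi^{n|\gamma|})(\gamma)$ restricted to $E_u\oplus E_s$ and $F_\gamma$ the half-density bundle on $E_s$, the trace of the induced action on $F_\gamma$ is exactly $|\det(d\phi^{n|\gamma|}_{/E_s})|^{1/2}$. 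Hyperbolicity ($|\det A_{/E_s}|<1<|\det A_{/E_u}|$) makes all these determinants nonzero and the expressions well defined.

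Granting~\eqref{eq:exact-weight}, the product formula~\eqref{eq:dGV} becomes precisely the Atiyah--Bott--Guillemin flat-trace Laplace transform of the \emph{twisted} operator $e^{tX_F}$, exactly in the form exploited in the proof of Theorem~\ref{thm:For-an-Anosov}:
\[
d_{sc}(z)=\exp\!\left(-\int_{|\gamma|_{\min}}^{\infty}\frac{1}{t}e^{-zt}\,\mathrm{Tr}^\flat\!\bigl(e^{tX_F}\bigr)\,dt\right),
\]
the flat trace $\mathrm{Tr}^\flat(e^{tX_F})$ being computed from the Schwartz kernel of the pullback-plus-bundle action and localizing on periodic orbits. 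Here is the structural point: $F=|\det E_s|^{+1/2}$ is exactly the ``semiclassical'' bundle of Section~\ref{subsec:A-special-semiclassical}, for which the band analysis of~\cite{faure-tsujii_anosov_flows_16} gives a discrete Pollicott--Ruelle spectrum for $X_F$ on an anisotropic space $\mathcal H_W(M)$, with the first band $B_0$ accumulating on the imaginary axis and a genuine spectral gap $\gamma_1^+<\gamma_0^\pm=0$ on its left. One then repeats the matrix heuristic from the excerpt ($d(z)=\det(z-A)$, $\frac{d}{dz}\log d=\mathrm{Tr}(z-X_F)^{-1}$, resolvent $=$ Laplace transform of $e^{tX_F}$) rigorously on $\mathcal H_W(M)$: the resolvent $(z-X_F)^{-1}$ is meromorphic with finite-rank poles at the Ruelle eigenvalues of $X_F$, so $d_{sc}$ extends meromorphically, with zeros at these eigenvalues and poles only where the flat-trace/spectral-trace comparison produces spurious contributions. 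Restricting to $\Re z>\gamma_1^+ +\epsilon$, only the first band $B_0$ (on the imaginary axis) and finitely many outlying eigenvalues survive, and Theorem~\ref{Thm:=000020bands} bounds the resolvent in the gap, forcing $d_{sc}$ to have only finitely many zeros and poles there; identifying the zeros with the first-band eigenvalues gives the claim.

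The main obstacle is the rigorous justification of the flat-trace computation and of the equality ``zeros of $d_{sc}$ $=$ Ruelle eigenvalues of $X_F$'' \emph{for a twisted, non-scalar operator}. Two technical points require care. First, $E_s\to M$ is only Hölder, so the bundle $F=|\det E_s|^{+1/2}$ is not smooth; following~\cite{faure-tsujii_anosov_flows_13} one must replace it by a smooth Grassmannian-type extension on which $X_F$ still makes sense and whose first band coincides with the one of interest — checking that this substitution does not change the zeta function on $\Re z>\gamma_1^+ +\epsilon$ is the delicate part. Second, one must control the difference between the \emph{flat} trace $\mathrm{Tr}^\flat(e^{tX_F})$ (a distribution in $t$, given by periodic-orbit sums) and the \emph{honest} trace of $e^{tX_F}$ acting on $\mathcal H_W(M)$ (a finite sum of $e^{tz_j}$ over Ruelle eigenvalues); the two agree up to smooth-in-$t$ corrections, and translating this into the statement that $d_{sc}$ has \emph{finitely many} poles — rather than a genuine entire function as in Theorem~\ref{thm:For-an-Anosov} — to the right of $\gamma_1^+$ is exactly where the ``semiclassical'' weight $|\det(\cdots)|^{-1/2}$ (as opposed to $|\det(\cdots)|^{-1}$) costs us holomorphy and only buys meromorphy. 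Everything else — convergence of~\eqref{eq:dGV} for $\Re z\gg1$ via the prime-orbit theorem bound on $\#\{\gamma:|\gamma|\le T\}$, and the contour/Laplace-transform manipulations — is routine given Theorems~\ref{Thm:=000020bands} and~\ref{thm:For-an-Anosov}.
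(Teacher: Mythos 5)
Your overall strategy --- rewrite $d_{sc}$ as the zeta function of the transfer operator twisted by the half-density bundle $F=|\det E_{s}|^{+1/2}$, compute its flat trace, and invoke the band structure of $X_{F}$ --- is the same route the paper sketches just before the statement (the paper itself gives no proof, only that heuristic plus the citation to \cite{faure-tsujii_anosov_flows_13}). But your pivotal step, the claimed \emph{exact} identity relating $\mathrm{Tr}\bigl((e^{T X_{F}})_{/F_{\gamma}}\bigr)\,|\det(\mathrm{Id}-A)|^{-1}$ to $|\det(\mathrm{Id}-A)|^{-1/2}$, is false; the paper is careful to assert it only as an asymptotic equivalence $\underset{t\to\infty}{\simeq}$. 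Writing $A=(d\phi^{T})_{/E_{u}\oplus E_{s}}$, $A_{u}=A_{/E_{u}}$, $A_{s}=A_{/E_{s}}$, one has the exact factorization
\[
\bigl|\det(\mathrm{Id}-A)\bigr|=\bigl|\det A_{u}\bigr|\;\bigl|\det(\mathrm{Id}-A_{u}^{-1})\bigr|\;\bigl|\det(\mathrm{Id}-A_{s})\bigr|,
\]
so the scalar by which $e^{TX_{F}}$ acts on the line $F_{\gamma}$, namely $\det(d\phi^{T}_{/E_{u}})^{1/2}=|\det A_{s}|^{-1/2}$ (your formula for this scalar also has the opposite sign of exponent from the paper's convention), differs from $|\det(\mathrm{Id}-A)|^{1/2}$ by the factor $\bigl(|\det(\mathrm{Id}-A_{u}^{-1})|\,|\det(\mathrm{Id}-A_{s})|\bigr)^{1/2}=1+O(e^{-\lambda T})$, which is never identically $1$ for a hyperbolic return map (already for $A=\mathrm{diag}(\mu,\mu^{-1})$ it equals $|1-\mu^{-1}|$). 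The determinant identity you invoke, $|\det(\mathrm{Id}-A)|\,|\det A|^{-1}=|\det(\mathrm{Id}-A^{-1})|$, is true but does not produce your equation; for a contact flow $|\det A|=1$ on $E_{u}\oplus E_{s}$, so it is vacuous here.

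This is not a cosmetic slip: if your identity held on the nose, $d_{sc}$ would literally equal the Fredholm-type determinant of $X_{F}$ and would extend \emph{holomorphically} to $\mathbb{C}$ with zeros exactly at $\mathrm{Spec}(X_{F})$ --- a strictly stronger statement than the theorem, which asserts only meromorphy, finitely many poles, and coincidence of zeros up to finitely many, and only on $\Re z>\gamma_{1}^{+}+\epsilon$. Those weakenings come precisely from the exponentially small multiplicative error above: expanding $|\det(\mathrm{Id}-A)|^{-1/2}$ (as the paper does in the Selberg computation, $x^{1/2}(1-x)^{-1}=\sum_{m\geq0}x^{m+1/2}$) produces, besides the leading first-band term, a tail associated with the deeper bands $B_{k}$, $k\geq1$, whose analytic continuation is controlled only to the right of $\gamma_{1}^{+}$. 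Your closing paragraph correctly senses that the $-1/2$ weight ``costs holomorphy,'' but attributes it to the flat-trace versus spectral-trace comparison rather than to this discrepancy; the quantitative control of the correction factor (an Euler-product-like term converging only in a half-plane tied to $\gamma_{1}^{+}$) is exactly the step your argument needs and does not supply.
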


One motivation for studying $d_{sc}\left(z\right)$ comes from the
Gutzwiller semiclassical trace formula in quantum chaos \cite{gutzwiller}.
In particular \textbf{in the case of a hyperbolic surface}, presented
in section \ref{sec:Pollicott-Ruelle-spectrum}, $d_{sc}\left(z\right)$
coincides (up to a shift) with\textbf{ }the \textbf{Selberg zeta function}
$\zeta_{\mathrm{Selberg}}$ : we have $\left(d\phi^{n\left|\gamma\right|}\right)_{/E_{u}\oplus E_{s}}\eq{\ref{eq:X,U,S}}\left(\begin{array}{cc}
e^{\left|\gamma\right|n} & 0\\
0 & e^{-\left|\gamma\right|n}
\end{array}\right)$, see \cite{bortwick_book_07}. This gives\footnote{Put $x=e^{-\left|\gamma\right|n}$ and use that $\left|\mathrm{det}\left(\begin{array}{cc}
1-x^{-1} & 0\\
0 & 1-x
\end{array}\right)\right|^{-1/2}=x^{1/2}\left(1-x\right)^{-1}=x^{1/2}\sum_{m\geq0}x^{m}$.}, see figure \ref{fig:PR-discrete-spectrum}, 
\begin{align*}
d_{sc}\left(z\right) & \underset{(\ref{eq:dGV})}{=}\exp\left(-\sum_{\gamma}\sum_{n\geq1}\sum_{m\geq0}\frac{1}{n}e^{-n\left|\gamma\right|\left(z+\frac{1}{2}+m\right)}\right)=\prod_{\gamma}\prod_{m\geq0}\left(1-e^{-\left(z+\frac{1}{2}+m\right)\left|\gamma\right|}\right)=:\zeta_{\mathrm{Selberg}}\left(z+\frac{1}{2}\right).
\end{align*}

\section{Microlocal mechanisms underlying the band structure.}\label{sec:Anosov-geodesic-flow}

In this section we explain how microlocal analysis leads to Theorem~\ref{Thm:=000020bands}
and Theorem~\ref{thm:.-Suppose-}, emphasizing the main steps. What
follows is a short summary of the proofs presented in \cite{faure-tsujii_anosov_flows_16}.

\subsection{Microlocal analysis on $M=\left(T^{*}\mathcal{N}\right)_{1}$.}

Recall the geodesic vector field $X$ on $M$ in definition \ref{def:The-geodesic-vector}.
We have $\mathrm{dim}M=2d+1$. On $M$, we consider local flow box
coordinates $y=\left(x,z\right)\in\mathbb{R}_{x}^{2d}\times\mathbb{R}_{z}$,
i.e. s.t. $X=\frac{\partial}{\partial z}$ and dual coordinates $\eta=\left(\xi,\omega\right)\in\mathbb{R}^{2d}\times\mathbb{R}$
on the fiber $T_{y}^{*}M$.{\small{} We introduce a}{\small\textbf{
family of wave packets $\left(y,\eta\right)\in T^{*}M\rightarrow\varphi_{\left(y,\eta\right)}\in C^{\infty}\left(M\right)$}}.
Their precise definition is given in \cite{faure-tsujii_anosov_flows_16}.
Here, with some abuse of notations that forget charts and partitions
of unity, we just notice that at high frequency $\left|\eta\right|\gg1$,
their expression is closed to a Gaussian wave packet
\begin{equation}
\varphi_{\left(y,\eta\right)}\left(y'\right)\underset{\left|\eta\right|\gg1}{\approx}a\,\exp\left(i\eta.y'-\left|\frac{x'-x}{\left\langle \eta\right\rangle ^{-1/2}}\right|^{2}-\left|\frac{z'-z}{\delta}\right|^{2}\right),\qquad\left\Vert \varphi_{\left(y,\eta\right)}\right\Vert _{L^{2}\left(M\right)}\underset{\left|\eta\right|\gg1}{\approx}1.\label{eq:wave_packets}
\end{equation}
with $\delta\ll1$.We define an operator called ``wave packet transform''
(or ``FBI transform'' or ``wavelet transform'')
\begin{equation}
\mathcal{T}:\begin{cases}
C^{\infty}\left(M\right) & \rightarrow\mathcal{S}\left(T^{*}M\right)\\
u & \rightarrow\left\{ \left(y,\eta\right)\mapsto\left(\mathcal{T}u\right)\left(y,\eta\right):=\langle\varphi_{y,\eta},u\rangle_{L^{2}\left(M\right)}\right\} 
\end{cases}\label{eq:T}
\end{equation}
that has the fundamental property called ``resolution of identity'':
\[
\mathcal{T}^{\dagger}\circ\mathcal{T}=\mathrm{Id}.
\]
With the wave packet transform isometry $\mathcal{T}$, analysis of
functions $M$ is transformed as an analysis of functions on $T^{*}M$.
We will study the lifted operator of the geodesic flow $\mathcal{T}e^{tX}\mathcal{T}^{\dagger}$.
See figure \ref{fig:resol_identity}.

\begin{figure}
\centering{}\scalebox{0.4}[0.4]{\input{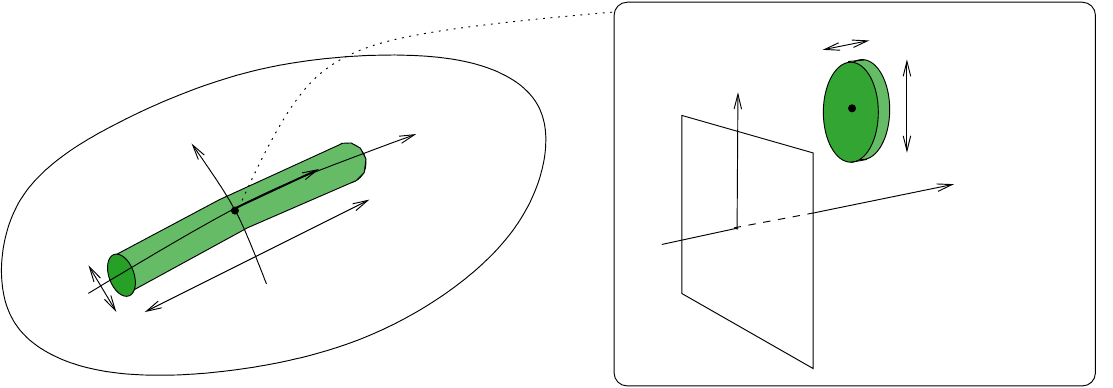tex_t}}{\small ~~~~~~\scalebox{0.6}[0.6]{\input{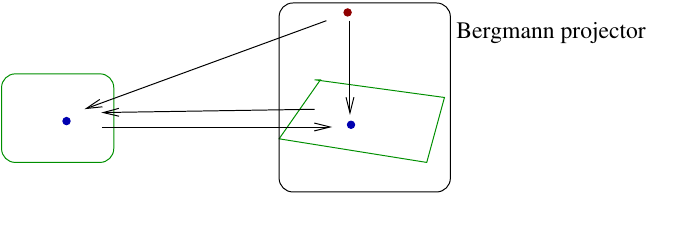tex_t}}}\caption{In green : the unit ball for the metric $g$, giving the effective
size of a wave packet.}\label{fig:resol_identity}
\end{figure}

In fact the construction of wave packets (\ref{eq:wave_packets})
and the transform $\mathcal{T}$ is based on the choice of a metric\footnote{closely related to ``Hörmander metric'' in \cite{hormander1979weyl},
\cite{lerner2011metrics,nicola_rodino_livre_11}.} \textbf{$g$} on $T^{*}M$, compatible with $\Omega=dy\wedge d\eta$,{\footnotesize
\begin{equation}
g_{y,\eta}:=\left(\frac{dx}{\left\langle \eta\right\rangle ^{-1/2}}\right)^{2}+\left(\frac{d\xi}{\left\langle \eta\right\rangle ^{1/2}}\right)^{2}+\left(\frac{dz}{\delta}\right)^{2}+\left(\frac{d\omega}{\delta^{-1}}\right)^{2}.\label{eq:def_metric_g}
\end{equation}
}The geometrical meaning is that the unit boxes for the metric $g$
correspond to the effective size of wave packets and reflect the \textbf{uncertainty
principle}. Technically it is convenient to work with wave packet
transform instead of the usual Pseudo Differential Operators (PDO)
calculus \cite{taylor_tome2}, because in Anosov dynamics most of
the structures are not smooth but only Hölder continuous. With the
wave packet transform in (\ref{eq:T}), these fractal structures can
be ignored at a scale smaller that the unit ball of the metric $g$.

We have the fundamental estimate \cite{faure_tsujii_Ruelle_resonances_density_2016}
closely related to the ''\textbf{theorem of propagation of singularities}''
\begin{align}
\forall t & \in\mathbb{R},\forall N>0,\exists C_{N}>0,\forall\left(y',\eta'\right),\left(y,\eta\right)\in T^{*}M,\label{eq:prop_sing}\\
 & \left|\left\langle \delta_{\left(y',\eta'\right)},\mathcal{T}e^{tX}\mathcal{T}^{\dagger}\delta_{\left(y,\eta\right)}\right\rangle _{L^{2}\left(M\right)}\right|\leq C_{N}\left\langle \mathrm{dist}_{g}\left(\left(y',\eta'\right),\left(d\phi^{t}\right)^{*}\left(y,\eta\right)\right)\right\rangle ^{-N}.
\end{align}
Eq.(\ref{eq:prop_sing}) express that the Schwartz kernel of $\mathcal{T}e^{tX}\mathcal{T}^{\dagger}$
on $T^{*}M\times T^{*}M$ is negligible outside the graph of the induced
flow $\left(d\phi^{t}\right)^{*}:T^{*}M\rightarrow T^{*}M$.

The Anosov property (\ref{eq:Anosov_Es_Eu}) gives dual directions
$E_{u}^{*},E_{s}^{*},\Sigma=\left(E_{u}\oplus E_{s}\right)^{\perp}=\mathbb{R}\mathcal{A}$
in cotangent space and we observe that the lifted dynamics $\left(d\phi^{t}\right)^{*}$
is a \textbf{``scattering dynamics''} on the \textbf{trapped set}
$\Sigma=\mathbb{R}\mathcal{A}\subset T^{*}M$ (Liouville 1-form),
i.e. $\Sigma$ is an invariant normally hyperbolic submanifold, $\mathrm{dim}\Sigma=2\left(d+1\right)$
and $\Sigma\backslash\left\{ 0\right\} $ is symplectic since $\mathcal{A}$
is contact. See figure \ref{fig:}.

From this geometric situation we can now follow and adapt the well
established strategy of Helffer-Sjostrand 1986 \cite{sjostrand_87}
for quantum resonances (here classical PR resonances).

\subsection{Pollicott-Ruelle spectrum of $X$.}

In the outer part of $\Sigma$, where trajectories go to infinity
in the past or future, we can put a weight $W:T^{*}M\rightarrow\mathbb{R}^{+}$
such that $W\circ\left(d\phi^{t}\right)^{*}$ decays with $t\rightarrow+\infty$
and $W\equiv1$ in a vicinity of the trapped set $\Sigma$. We define
a weighed Hilbert space, called \textbf{anisotropic Sobolev space}:
\begin{equation}
\mathcal{H}_{W}\left(M\right):=\overline{\left\{ u\in C^{\infty}\left(M\right)\textnormal{\,s.t. }W\mathcal{T}u\in L^{2}\left(T^{*}M\right)\right\} }.\label{eq:def_H_W}
\end{equation}

An immediate consequence is that the operator $e^{tX}:\mathcal{H}_{W}\left(M\right)\rightarrow\mathcal{H}_{W}(M)$
has a ``negligible contribution'' outside $\Sigma$ so only the
\textbf{dynamics of $\left(d\phi^{t}\right)^{*}$ in a vicinity of
$\Sigma$ }plays a role for $e^{tX}$.

At that point, we have enough to deduce some discrete Pollicott-Ruelle
spectrum of $X:\mathcal{H}_{W}\left(M\right)\rightarrow\mathcal{H}_{W}(M)$.
This has been done as done in \cite{fred_flow_09} with usual PDO
calculus or \cite{faure_tsujii_Ruelle_resonances_density_2016} as
explained here, also in \cite{bonthonneau2020fbi} for Gevrey or analytic
Anosov flows. 

\subsection{Linearized dynamics and symplectic spinors.}

To go further and reveal the band structure of the PR spectrum, we
fix some exponent $0<\mu<1$. At every point $\rho=\omega\mathcal{A}\left(m\right)\in\Sigma$
of the trapped set, with $m\in M$ and $\omega\gg1$, we consider
a vicinity of $\rho$ of $g-$size $\asymp\omega^{\mu/2}\gg1$. From
the expression (\ref{eq:def_metric_g}) of the metric $g$, this implies
that the projection this ball on $M$ has a tiny size $\asymp\omega^{-\left(1-\mu\right)/2}\ll1$
if $\omega\gg1$. Hence this will allow us to use the \textbf{linearization
of the dynamics $d\left(d\phi^{t}\right)^{*}:TT^{*}M\rightarrow TT^{*}M$
}as a good local approximation, $\omega\gg1$. See figure \ref{fig:}.

Now we enter the setting where the relevant phenomena take place. 

\begin{figure}
\begin{centering}
\scalebox{0.7}[0.7]{\input{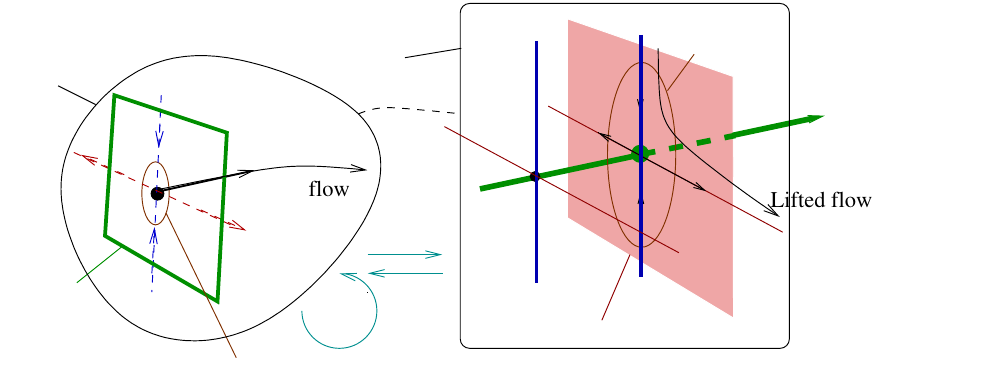tex_t}}
\par\end{centering}
\caption{}\label{fig:}
\end{figure}

Recall that the trapped set $\Sigma\subset T^{*}M$ is \textbf{symplectic}
and \textbf{normally hyperbolic.} Hence at every point $\rho=\omega\mathcal{A}\left(m\right)\in\Sigma$,
there is a decomposition invariant by the dynamics $d\left(d\phi^{t}\right)^{*}$
( a ``micro-local'' decoupling)
\[
T_{\rho}T^{*}M=\underbrace{T_{\rho}\Sigma}_{\mathrm{Tangent}}\overset{\perp}{\oplus}\underbrace{\left(N_{u}\left(\rho\right)\oplus N_{s}\left(\rho\right)\right)}_{\mathrm{Normal}\,N\left(\rho\right)}
\]
and we can treat the linearized dynamics separately along the tangential
direction $T_{\rho}\Sigma$ and along the normal direction $N\left(\rho\right)$.

Here we use a simple but important property of linear symplectic maps
explained in \cite[eq.(43)]{fred-PreQ-06} and \cite[thm C.33]{faure_tsujii_Ruelle_resonances_density_2016},
which is that they factorize into two quantum (spinoral) components.
Here, concerning the linear symplectic map $\Phi:=d\left(d\phi^{t}\right)^{*}$
on $TT^{*}M$, it expresses the pull back operator $\Phi^{\circ}:u\mapsto u\circ\Phi$
as a tensor product of two metaplectic operators $\tilde{\mathrm{Op}}\left(\Phi_{T\Sigma}\right)$,
$\tilde{\mathrm{Op}}\left(\Phi_{N}\right)$, each is the \textbf{Weyl
quantization} of the linear symplectic map $\Phi$ restricted to the
respective component $T\Sigma$ or $N$:
\begin{equation}
\Phi^{\circ}=\tilde{\mathrm{Op}}\left(\Phi_{T\Sigma}\right)\otimes\tilde{\mathrm{Op}}\left(\Phi_{N}\right).\label{eq:factorization}
\end{equation}

This simple formula (\ref{eq:factorization}) has a profound meaning:
on the left, the “classical dynamics” factorizes into two quantum
dynamics. More precisely, one can show that $\Phi_{N}$ is conjugate
to $d\phi^{t}:\mathrm{Ker}(\mathcal{A})\rightarrow\mathrm{Ker}(\mathcal{A})$
and $\Phi_{T\Sigma}$ is conjugate to $\mathcal{C}(d\phi^{t})\mathcal{C}$,
where $\mathcal{C}$ denotes complex conjugation. Thus, at the level
of linear symplectic maps, quantum dynamics appears as a “square root”
of classical dynamics. For this reason, the spaces on which these
metaplectic operators act are sometimes called {*}symplectic spinors{*}
in the literature.

This factorization is, however, quite surprising: each “quantum component”
carries an uncertainty principle, but when combined their tensor product
cancels these effects, reproducing the classical pullback operator
$\Phi^{\circ}$, which transports Dirac measures to Dirac measures.
For general symplectic dynamics observed in $L^{2}$-spaces, the two
quantum components are always present, but they remain “hidden” inside
this tensor product.

\paragraph{Time evolution reveals the tangential quantum factor:}

The dynamics we are interested in is hyperbolic, and with the weight
(or escape function) $W$ in (\ref{eq:def_H_W}) we observe the two
components differently: the tangential dynamics $\tilde{\mathrm{Op}}(\Phi_{T\Sigma})$
acts in an $L^{2}$ space, while the normal dynamics $\tilde{\mathrm{Op}}(\Phi_{N})$
acts on an anisotropic Sobolev space. Recall that this anisotropic
setting is the natural framework for analyzing correlation functions
of the Anosov flow. 

Let us consider the normal operator $\tilde{\mathrm{Op}}\left(\Phi_{N}\right)$
where $\Phi_{N}$ is hyperbolic, in fact conjugated to the linear
symplectic map $d\phi^{t}:E_{u}\oplus E_{s}\rightarrow E_{u}\oplus E_{s}$.
This gives that $\tilde{\mathrm{Op}}\left(\Phi_{N}\right)$ is conjugated
to 
\begin{equation}
\mathrm{Op}^{\mathrm{Weyl}}\left(d\phi^{t}\right)=\left|\mathrm{det}\left(d\phi_{/E_{s}}^{t}\right)\right|^{1/2}\left(d\phi_{/E_{s}}^{t}\right)^{\circ}\label{eq:Op_weyl}
\end{equation}
with a contracting linear map $d\phi_{/E_{s}}^{t}$. This operator
preserves the finite rank vector bundle $\mathcal{F}_{k}=\left|\mathrm{det}E_{s}\right|^{-1/2}\otimes\mathrm{Pol}_{k}\left(E_{s}\right)$
constructed with homogeneous polynomials of degree $k$, introduced
in (\ref{eq:def_F_k}) and that belong to the anisotropic Sobolev
space. We have the estimates $\ref{eq:deg_gamma_+-}$, so that for
large time $t\gg1$, the first band $k=0$ with rank one bundle $\mathcal{F}_{0}$
emerges\footnote{A simple toy model to consider is the contracting map $d\phi^{t}:x\in\mathbb{R}\mapsto e^{-t}x\in\mathbb{R}$
for which monomials of degree $k\in\mathbb{N}$ are eigenvectors $\mathrm{Op}^{\mathrm{Weyl}}\left(d\phi^{t}\right)x^{k}\eq{\ref{eq:Op_weyl}}e^{t\left(-\frac{1}{2}-k\right)}x^{k}$
giving spectral lines for the generator at $\mathrm{Re}\left(z\right)=-\frac{1}{2}-k$
as on figure (\ref{fig:PR-discrete-spectrum}).}.

The effect of the long-time dynamics is to eliminate the second factor
in (\ref{eq:factorization}), which is asymptotically replaced by
a rank-one projector. What remains is the first, unitary quantum factor
$\tilde{\textrm{Op}}(\Phi_{T\Sigma})$ acting on the tangent space.
This is precisely how quantum dynamics emerges. 

\begin{remark}Technically, this reduction, naturally produced by
the dynamics, can be compared with the ad-hoc procedure of geometric
quantization, where one projects the evolution of functions onto a
subspace of “polarized sections” using a Szegö, Bergman, or Toeplitz
projector. The crucial difference is that in geometric quantization
the image of the projector is not invariant under the dynamics, whereas
here the projection is generated by the dynamics itself and is invariant
by definition (assuming the presence of a spectral gap). \end{remark}

\subsection{Band structure of the Pollicott-Ruelle spectrum.}

After this step, we can re-construct the global situation for the
Anosov geodesic flow that we aim to study. In \cite{faure-tsujii_anosov_flows_16},
we introduce a \textbf{semi-classical calculus} on sections $C\left(\Sigma;\mathcal{F}_{k}\right)$
with the vector bundle $\mathcal{F}_{k}=\left|\mathrm{det}E_{s}\right|^{-1/2}\otimes\mathrm{Pol}_{k}\left(E_{s}\right)$
over the symplectic trapped set $\Sigma$. We obtain associated projectors
$\mathrm{Op}_{\Sigma}\left(T_{k}\right)$ similar to the Bergman (or
Szegö ) projector in \textbf{geometric quantization,} but here, they
are (approximately) \textbf{invariant}.

Assuming $\gamma_{k+1}^{+}<\gamma_{k}^{-}$, we deduce that the discrete
Pollicott-Ruelle spectrum of $X$ in $\mathcal{H}_{W}\left(M\right)$
is contained in vertical bands $B_{k},k\in\mathbb{N}$ with gaps where
the resolvent is uniformly bounded. This gives theorem \ref{Thm:=000020bands}
and figure \ref{fig:Band-structure}.

Moreover, the \textbf{effective semi-classical calculus} on $C\left(\Sigma;\mathcal{F}_{k}\right)$
gives also a precise Weyl law (these techniques are similar to second
micro-localization techniques in other papers).

\section{Informal discussion.}\label{sec:Informal-discussion}

We conclude with an informal discussion, intended to put the results
into perspective. These considerations have motivated the work presented
in this paper. For further discussions, see \cite{fred-PreQ-06,faure-tsujii_prequantum_maps_12}. 

From a physical viewpoint, two striking similarities can be noted:
\begin{enumerate}
\item \label{enu:Theorem--shows-1} Theorem~\ref{eq:Gap} shows that the
evolution of probability measures under a deterministic yet chaotic
flow (Anosov geodesic flow) splits into an equilibrium measure plus
small fluctuations, the latter governed by an effective Schrödinger-type
equation---hence a form of “quantum dynamics” emerges.
\item \label{enu:In-physics,-experimental-1} In physics, phenomena are
described within the quantum wave formalism, with the Born rule $p(x)\,dx=|\psi(x)|^{2}\,dx,$
suggesting a probabilistic structure possibly arising from deeper
deterministic laws.
\end{enumerate}
One may then ask whether the situations in~\ref{enu:Theorem--shows-1}
and~\ref{enu:In-physics,-experimental-1} point to a deterministic
origin of quantum mechanics (see, e.g.,~\cite{nelson2012review}). 

From a mathematical viewpoint, quantization is a procedure assigning
to a symbol $a\in C^{\infty}(T^{*}\mathbb{R}_{x,\xi}^{n})$ an operator
$\mathrm{Op}(a):L^{2}(\mathbb{R}^{n})\!\to\!L^{2}(\mathbb{R}^{n})$
(see~\cite{taylor_tome2,Dumassi-Sjostrand,zworski_book_2012}). For
example, $\mathrm{Op}(\xi_{j})=-i\,\partial_{x_{j}}$. For a geodesic
flow on $(\mathcal{N},g)$, the symbol $a(x,\xi)=\|\xi\|_{g}$ yields
(up to lower-order terms) the wave operator $\sqrt{\Delta}\!\approx\!\mathrm{Op}(\|\xi\|_{g})$,
with $\Delta=d^{\dagger}d$ the Hodge Laplacian generating the wave
equation $\partial_{t}u_{t}=i\sqrt{\Delta}\,u_{t}$ (hence $\partial_{t}^{2}u_{t}=-\Delta u_{t}$).

Semiclassical analysis (WKB, Egorov, etc.) shows that in the short-wave
limit $\lambda\!\ll\!1$, solutions $u_{t}$ propagate along geodesics---
as in geometrical optics or the classical limit of quantum mechanics.
Remarkably, Theorem~\ref{eq:Gap} describes the reverse process:
\[
\text{geodesic flow}\underset{t\gg1}{\Longrightarrow}\text{quantum dynamics},
\]
whose precise meaning remains open.

Quantization is not unique: distinct pseudodifferential quantizations
sharing the same classical symbol may have different spectra (e.g.
Weyl vs. geometric quantization). In Theorem~\ref{eq:Gap}, the operator
$A=\Pi_{[0,0]}X\Pi_{[0,0]}$ acts as a canonical quantization intrinsically
defined by the Anosov flow. Its spectrum coincides with the zeros
of the dynamical zeta function (Theorems~\ref{thm:For-an-Anosov},
\ref{thm:The-semi-classical-zeta}), suggesting a privileged link
with quantum chaos (see~\cite{nonnenmacher_2011,dyatlov2022macroscopic,anantharaman2022quantum}). 

\bibliographystyle{siamplain}
\bibliography{/home/faure/articles/articles}
 
\end{document}